\documentclass[12pt]{article}
\pdfoutput=1
\usepackage[T1]{fontenc}
\usepackage[utf8]{inputenc}
\usepackage{lmodern}
\usepackage{appendix}
\usepackage{hyphenat}
\usepackage{amsmath,amssymb,amsthm,mathtools}
\usepackage{mleftright}
\mleftright

\usepackage[margin=1.25in]{geometry}
\usepackage[onehalfspacing,nodisplayskipstretch]{setspace}
\usepackage[authoryear,round]{natbib}
\usepackage{booktabs,array}
\usepackage{caption}
\usepackage[colorlinks=true,citecolor=blue,urlcolor=blue,linkcolor=blue]{hyperref}
\usepackage[noabbrev]{cleveref}
\makeatletter
\AddToHook{begindocument/end}{\ifx\ltx@label\label\let\label@in@display\cref@old@label@in@display\fi}
\makeatother
\allowdisplaybreaks[2]
\hypersetup{pdftitle={A LATE-Targeting Experimental Design for Multi-Arm RCTs},pdfauthor={Tetsuya Kaji}}
\newtheorem{theorem}{Theorem}
\newtheorem{proposition}[theorem]{Proposition}
\newtheorem{lemma}[theorem]{Lemma}

\newtheorem*{claim}{Claim}
\theoremstyle{definition}
\newtheorem{definition}{Definition}[section]
\newtheorem{assumption}{Assumption}
\theoremstyle{remark}
\newtheorem{remark}{Remark}[section]
\crefname{assumption}{Assumption}{Assumptions}
\Crefname{assumption}{Assumption}{Assumptions}
\crefname{theorem}{Theorem}{Theorems}
\crefname{lemma}{Lemma}{Lemmas}
\crefname{proposition}{Proposition}{Propositions}
\crefname{definition}{Definition}{Definitions}
\crefname{section}{Section}{Sections}
\crefname{appendix}{Appendix}{Appendices}
\newcommand{\E}{\mathbb E}
\newcommand{\K}{\mathcal K}
\newcommand{\Rs}{\mathcal R}
\newcommand{\Ss}{\mathcal S}
\newcommand{\ind}[1]{\mathbf 1\{#1\}}
\newcommand{\row}{\operatorname{row}}
\DeclareMathOperator*{\argmax}{arg\,max}

\newcommand\z{\hphantom{0}}
\newcommand\cm{\hphantom{,}}

\title{A LATE-Targeting Experimental Design for Multi-Arm RCTs%
\thanks{This work is supported by the Richard N.\ Rosett Faculty Fellowship at the University of Chicago Booth School of Business.
I thank Andrew Jordan, Hidehiko Ichimura, Jack Mountjoy, Shosei Sakaguchi, Michal Koles\'ar, Edward Vytlacil, Azeem Shaikh, Peter Hull, Joshua Angrist, Vitalii Tubdenov, Kevin Chen, Sukjin Han, Lixiong Li, Seth Richards-Shubik, Max Tabord-Meehan, and seminar participants for helpful comments. Generative AI was used to formalize the proofs and polish the writing; all ideas were generated by the author.}}
\author{Tetsuya Kaji\\University of Chicago}
\date{\today}

\begin{document}
\maketitle
\begin{abstract}
We study the design of multi-arm experiments that identify local average treatment effects under imperfect compliance. With three or more arms, conventional randomization fails to identify margin-specific effects even under strong extensions of monotone compliance. A choice-theoretic analysis unifies these extensions and motivates a two-stage design: subjects report a preferred arm and are randomized conditional on it. If treatment is confined to the assigned or reported arm, one-sided noncompliance is restored conditional on the report, identifying the effect between the two arms for their compliers. Any tilt of assignment toward the reported arm makes reporting the favorite strictly optimal, and no assignment lottery can strictly elicit the ranking below the favorite. We then discuss the design's applicability and feasibility in relation to the choice model.
\end{abstract}

\section{Introduction}\label{sec:intro}

Economic experiments can be costly to organize and implement. A multi-arm randomized controlled trial (RCT) allows us to compare several interventions using the same infrastructure and control group. Researchers can then address several policy questions in one study, with the choice of arms determining which questions can be answered \citep{dgk2008}. Welfare-to-work experiments, for example, have compared service strategies and program components \citep{gh2023}.%

These comparisons become harder to interpret, however, when subjects do not receive their assigned treatments. Imperfect compliance arises in economic experiments when participants decline their assigned treatment or seek another, or when providers deviate from protocol. In a French counseling experiment, some job seekers entered a program other than the one assigned \citep{bcg2014}. In an Ethiopian experiment, agencies introduced programs in some communities assigned to other arms and failed to introduce them where assigned \citep{tdj2015}.

One form of imperfect compliance for which treatment effects are identified is \emph{one-sided noncompliance}. Individuals assigned to an active treatment may receive it or fall back to the control. Individuals assigned to the control must stay. Let \(K\) be the number of arms and \(R_i(z)\) the treatment unit \(i\) would receive if assigned to \(z\in\mathcal{K}\coloneqq\{1,\ldots,K\}\). Denoting the control by \(c\in\mathcal{K}\), one-sided noncompliance requires
\[
R_i(z)\in\{z,c\}\qquad\text{for every \(z\)}.
\]
Each active arm can then be compared with the control as in a binary experiment.\footnote{It identifies the effect on recipients of \(z\), known as the Bloom result \citep[Theorem~4.4.2, p.~164]{ap2009}.}
We can relax one-sided noncompliance by allowing arm-wise always-takers, who receive a fixed arm under \emph{every} assignment. This condition is called \emph{extended monotonicity} \citep{bcg2013}. Then every comparison of an active arm and the control reduces to monotone compliance in a binary experiment.

Even with this relaxation, however, extended monotonicity can be too restrictive in applications. \Cref{tab:tarozzi} illustrates the compliance in the experiment in \citet{tdj2015}, which assigned Peasant Associations to microfinance, family planning, both programs, or neither. The columns show the assigned programs and the rows show the programs implemented. Each cell counts pooled household survey observations. Under one-sided noncompliance, only the diagonal cells and the last row can be nonzero, since any deviation must lead to the control. Extended monotonicity also allows off-diagonal entries in the active-treatment rows, but only through arm-wise always-takers. Consequently, for each active arm, the population probability of receiving that arm must be the same under every off-target assignment.
In \Cref{tab:tarozzi}, however, microfinance-only implementation appears in communities assigned to neither program and to both programs, but not in those assigned to family planning alone. The family-planning and joint-program rows also have uneven off-diagonal entries, with some cells zero and others positive. These patterns cast doubt on extended monotonicity and motivate allowing for more general forms of deviation.%
\footnote{In the French counseling experiment of \citet{bcg2014}, \citet[Example~2.1]{bht2025} formally reject extended monotonicity based on a test that also uses outcomes.}

\begin{table}[t]
\centering
\caption{Compliance table in the experiment of \citet{tdj2015}.}\label{tab:tarozzi}
\small
\begin{tabular}{lccccc}
\toprule
\midrule
&\multicolumn{4}{c}{Assignment}&\\
\cmidrule(lr){2-5}
Treatment & MF & FP & Both & Neither & Total\\
\midrule
Microfinance (MF) &2,406&\z\cm\z\z0&\z\cm\z83&\z\cm302&\z2,791\\
Family planning (FP) &\z\cm\z\z0&1,763&\z\cm\z87&\z\cm\z58&\z1,908\\
Both &\z\cm374&1,217&2,753&\z\cm\z\z0&\z4,344\\
Neither &\z\cm416&\z\cm194&\z\cm258&2,764&\z3,632\\
\midrule
Total &3,196&3,174&3,181&3,124&12,675\\
\bottomrule
\end{tabular}
\begin{minipage}{0.96\textwidth}
\footnotesize\singlespacing
\emph{Notes:} Author's tabulation of the replication data for \citet{tdj2015}.
\end{minipage}
\end{table}

This raises our first question: can a conventional multi-arm RCT identify the average effect of one treatment against another for some fixed subgroup under more general forms of noncompliance? Our first main result gives a negative answer for several extensions in the literature. \cref{thm:impossibility} shows that with three or more arms and unrestricted potential outcomes, no nontrivial margin-specific group effect is identified even when ordered monotonicity of \citet{ai1995}, unordered monotonicity of \citet{hp2018}, and stability and strict stability (our terminology) of \citet[Assumption~2.2]{bhmsv2025} and \citet[Assumption~2.3]{bht2025} hold together. The problem is that the data do not let us recover the mean outcome for a common group under any two treatments, allowing construction of two populations with the same observable distribution but different effects for that group.
Instead, these papers concern \emph{across-arm} local average treatment effects (LATEs) \citep{ai1995,hp2018}, testable implications \citep{bht2025}, and partial identification \citep{bhmsv2025}.
Some applied papers also fall back to the \emph{intent-to-treat} (ITT) effects \citep{tdj2015,aadfh2015}.

The kind of noncompliance that violates extended monotonicity does not presuppose irrational or unpredictable behavior. It can arise from ordinary rational choice. This raises our second question: how much noncompliance can a reasonable model of choice generate?
We start from a model where a subject weighs the perceived benefit of each arm against a positive cost of deviation, and ask which compliance functions it can produce under our maintained assumptions of the stable unit treatment value assumption (SUTVA) and exclusion.
The induced compliance behavior, called \emph{acyclicity}, subsumes a broader range of noncompliance behavior than the four restrictions discussed above.
Each of the four can then be represented by additional restrictions on the choice model: ordered monotonicity when the cost of moving up decreases with the assignment, stability and strict stability when costs depend on the treatment alone, and unordered monotonicity when costs are common across subjects and benefits are suitably restricted.

This model points to an answer to our third question: can we design an experiment so that margin-specific effects for some subpopulations are identified?
We treat behavioral assumptions about compliance as \emph{inputs} to experimental design, and align the design with subjects' incentives so that their compliance behavior is visible and predictable to the administrator. Our protocol asks each subject to report a preferred treatment before randomization and randomizes conditional on the report. The key identifying restriction, \emph{compliance to stated preference}, requires a subject to receive either the assigned treatment or the stated preferred treatment.
Under this restriction, a Wald ratio identifies the effect of the reported arm relative to each alternative for subjects who report that arm but would accept the alternative. This estimand coincides with that of \citet[Proposition~2(iii)]{klm2016}, who condition on an applicant's next-best field; what an experiment adds is the ability to create this structure by design. Because subjects can switch after assignment, no assignment lottery can credibly elicit a next-best arm, but any tilt toward the report elicits the favorite, the one arm from which a subject has no reason to deviate further.
The intuition is that, conditional on the stated preferred arm, the compliance behavior is back to one-sided noncompliance where the preferred arm plays the role of the control.

The choice model supplies a reasonable protocol to enforce this restriction.
The survey report is used in two ways.
First, we can tilt the assignment probability toward the reported arm so that each subject is incentivized to report a true favorite.
Second, when a deviation must be accommodated, the administrator can reassign the subject to her reported favorite, so she has no incentive to deviate further.
Under the maintained choice assumptions the favorite is a fixed point: any further deviation would trade the highest benefit for a lower one at a positive cost.
The reassignment step can be omitted under strict stability with the reported favorite as the common deviation destination.

This paper relates to a wide range of subjects in causal inference.
While we focus on margin-specific treatment effects, many papers on multivalued treatment concern across-arm treatment effects, where the comparison is an average over several treatment margins.
Under monotonicity, \citet{ai1995} show that an instrumental variable (IV) with ordered treatment intensity identifies a weighted average of causal responses across treatment increments. For unordered treatments, \citet{hp2018} develop a framework whose identified effects can compare one treatment with different alternatives for different compliance types.
\citet{ls2024} likewise show that targeting instruments can identify across-arm effects.
\citet{ast2025} identify averages across treatment margins using a binary instrument and separate effects for switcher types under additional restrictions on mean potential outcomes. Our impossibility result focuses on \emph{margin-specific} effects and leaves potential outcomes unrestricted.

Other studies identify margin-specific effects by combining restrictions on treatment choice with additional information or richer instruments. \citet{klm2016} use admissions cutoffs and preference rankings to identify field-of-study LATEs, conditioning on the next-best alternative and imposing an irrelevance restriction on responses to the instrument. \citet{p2022} uses the incentives of Moving to Opportunity to restrict neighborhood choices and adds interpolation assumptions to separate effects of neighborhood types. \citet{m2022} separates community-college effects relative to no college from those relative to four-year college. Identification uses continuous distance instruments and unordered partial monotonicity, together with an assumption equating mean potential outcomes under community college for marginal compliers induced by different instruments. \citet{ls2018} identify margin-specific effects using multidimensional threshold-crossing models and sufficiently rich continuous instruments. Their estimands are marginal treatment effects (MTEs) indexed by latent selection variables, rather than the complier-group LATEs identified by our design. With discrete instruments, \citet{ls2024} obtain bounds on margin-specific LATEs under positive-selection restrictions, and point identification under equality restrictions on the relevant group mean outcomes.

A third line of work incorporates subjects' preferences into experimental design. \citet{r1989} combines free-choice and randomized-treatment arms in a two-stage trial. Subsequent work develops preference-specific estimands and IV interpretations of such designs \citep{lll2008,wc2017}. \citet{kybb2019} analyze multivalued preference trials, distinguishing stated preferences from actual choices and studying identification and sensitivity to discrepancies between them. \citet{k2023} uses report-dependent assignment probabilities to elicit truthful preferences and identify preference-specific effects in a binary setting. Mechanism-design approaches to experiments elicit subjects' valuations with incentives and use them in assignment: selective trials screen willingness to pay through menus of treatment probabilities at prices \citep{cps2012}, Experiment-as-Market computes assignment probabilities from reported willingness to pay in a pseudo-market with an approximate incentive guarantee \citep{narita2021}, and willingness-to-pay elicitation by the Becker--DeGroot--Marschak mechanism before a randomized price recovers effects by valuation in the field \citep{bfg2020}. These designs use money or cardinal reports and treat assignment as receipt.
Our design addresses imperfect compliance with multiple arms by combining a preference report with a restriction on final receipt.
\citet{tppz2015} study incentive-compatible experiments in which the strategic parties are the treatment providers rather than the subjects: providers choose which version of a treatment to administer in a winner-take-all comparison, and the scoring rule is designed so that the performance-maximizing version is a dominant strategy.

More broadly, our analysis relates to work on the role of assignment and selection in causal interpretation. \citet{goff2025} and \citet{nps2026} provide general criteria for IV identification. \citet{chen2026} characterizes the assignment processes under which a given linear regression estimates a contrast of potential outcomes without restricting those outcomes. \citet{bh2023} study treatment choices generated by a game of strategic substitutes and derive bounds on average treatment effects using excluded instruments and restrictions on selection and outcomes. Our focus is on using restrictions on compliance to design experiments for identification.

Importantly, several topics are kept outside the scope of this paper. We do not study across-arm LATEs, which average multiple treatment margins for a group identified by an instrument, or partial identification of margin-specific LATEs; both have their own literatures, discussed above and after \cref{thm:impossibility}.
We maintain the no-interference and exclusion assumptions throughout, while acknowledging that there is an established literature on interference \citep{hh2008,manski2013,ijm2021,vb2023} and a recent paper that casts doubt on exclusion in an experimental setup \citep[Section~8.1, p.~32]{h2026}.
While we set the target at margin-specific LATEs, we do not take a particular stand on their policy relevance; see \citet{hu2010}, \citet{deaton2010}, and \citet{imbens2010} for a debate and \citet{mt2024} for a synthesis on this matter.
These simplifications are deliberate.
We focus on making progress toward identification of margin-specific LATEs in a simple but difficult situation of multi-arm RCTs with general imperfect compliance.

The rest of the paper is organized as follows. \Cref{sec:setup} defines the experiment and the target effects. \Cref{sec:models} reviews existing extensions of monotone compliance, and \cref{sec:impossibility} gives the impossibility result. \Cref{sec:choice} develops the choice-theoretic foundations of compliance, and \cref{sec:design} presents the protocol and its identifying formula.
The appendices provide the proofs of all results.

\section{Multi-arm randomized experiments}\label{sec:setup}

We distinguish the treatment assigned to a unit from the treatment it receives. Let \(\K=\{1,\ldots,K\}\) be the index set of treatment arms. Unit \(i\) is assigned to \(Z_i\in\K\) and receives \(D_i\in\K\). A unit can be an individual or a randomized cluster. We study identification from population distributions.

Throughout the paper, we maintain the following assumption.

\begin{assumption}[SUTVA, exclusion, and consistency]\label{ass:outcomes}
Treatment choice depends only on the unit's own assignment.
This allows us to represent treatment receipt by a \emph{compliance function} of \(Z_i\),
\[
R_i:\K\longrightarrow\K,\qquad D_i=R_i(Z_i).
\]
The unit's outcome depends only on its own received treatment and not directly on assignment.
Thus we may write potential outcomes as \(\mathbf Y_i=(Y_{i1},\ldots,Y_{iK})\) such that
\[
D_i=R_i(Z_i),\qquad Y_i=Y_{iD_i}.
\]
\end{assumption}

SUTVA rules out interference across units in treatment receipt and outcomes.
Exclusion requires that holding received treatment fixed, the unit's own assignment does not affect its outcome.
Consistency assumes that the observed outcome equals the potential outcome under the assignment and treatment actually realized.

Let \(\Rs\) be the set of allowed compliance functions. Subjects who share a compliance function form a \emph{compliance type}; \citet{hp2018} call these response types. A restriction on \(\Rs\) specifies which responses to assignment are possible. For example, perfect compliance permits only the identity map \(R(z)=z\). \Cref{sec:models} reviews restrictions on \(\Rs\) that extend monotone compliance, and \cref{sec:choice} derives them from a model of treatment choice. Before turning to these restrictions, we define the treatment effects we seek to identify.

\subsection{Groups and margin-specific effects}\label{sec:groups}

To define a treatment effect, we must also specify whose outcomes we average. Let \(\mathcal I\) be the index set of subjects,\footnote{For a finite population, we may take \(\mathcal I=\{1,\ldots,n\}\). If we take a stand that there is a superpopulation, say for asymptotic arguments, we may view it as \(\mathcal I=\mathbb N\) or \(\mathcal I=[0,1]\).} and let \(X_i\) record their pre-assignment characteristics, including their compliance functions, preferences, and deviation costs. Let \(\Delta(\mathcal I)\) denote the probability simplex, that is, the set of probability distributions on \(\mathcal I\). A \emph{group} is an element \(G\in\Delta(\mathcal I)\). This definition includes ordinary subpopulations and allows their members to be represented in different proportions. Groups may be specified directly or by pre-assignment characteristics; subpopulations defined by potential outcomes are treated after \cref{def:target}.

We describe a population by the distribution \(\mu\in\Delta(\mathcal I)\) of its subjects together with the conditional distribution of each subject's potential outcomes \(\mathbf Y_i\), and write \(P\) for the joint law of \((X_i,\mathbf Y_i)\) that they induce. We write \(\E[\,\cdot\mid i]\) for expectation under the conditional distribution given subject \(i\). We allow arbitrary population distributions of pre-assignment characteristics and arbitrary conditional potential-outcome distributions, subject to the restrictions maintained for each experiment and the moment conditions below. Every compliance function in \(\Rs\) is represented among the subjects indexed by \(\mathcal I\), although a given population may assign it zero probability. We assume the expectations below exist.

\begin{samepage}
\begin{definition}[Margin-specific group ATE]\label{def:target}
For \(t\ne s\) and a group \(G\in\Delta(\mathcal I)\) with \(\int_{\mathcal I}\E[\lvert Y_{id}\rvert\mid i]\,dG(i)<\infty\), \(d\in\{t,s\}\), the \emph{margin-specific group average treatment effect (ATE)} is defined by
\begin{equation}\label{eq:group}
\tau_G(t,s)=\int_{\mathcal I}\E[Y_{it}-Y_{is}\mid i]\,dG(i).
\end{equation}
\end{definition}
\end{samepage}

For a subpopulation defined by pre-assignment characteristics and having positive probability, taking \(G\) to be the population distribution conditional on membership gives its conditional ATE. For a subpopulation defined by a fixed event \(A\) in \((X_i,\mathbf Y_i)\), the corresponding effect is the conditional ATE \(\E[Y_{it}-Y_{is}\mid(X_i,\mathbf Y_i)\in A]\). Taking \(G\) to be the whole population's distribution gives the population ATE. Effects for instrument-induced groups are called local average treatment effects (LATEs).
The definition naturally includes convex combinations of group effects.

The target in \eqref{eq:group} is \emph{margin-specific} because it compares the same two treatments for every subject, using the same group distribution for both. It differs from an \emph{across-arm} effect, which averages over different treatment margins. Ordered-treatment IV, for example, can average effects across successive treatment intensities \citep{ai1995}. Comparing a program with each participant's next-best alternative can also combine different margins \citep{kw2016,ast2025}. Such across-arm effects may be identified even when no margin-specific effect is identified, but they do not generally identify the target in \eqref{eq:group}.

For a fixed assignment rule, a target is identified if any two admissible laws generating the same distribution of the data recorded by that experiment give it the same value. To prove nonidentification, we construct two admissible populations with the same observable distribution and different target values.
To establish identification, we find a group for which the data identify the mean potential outcomes under two treatments.

\section{Existing extensions of monotone compliance}\label{sec:models}

With more than two arms, monotonicity can be extended in several ways.
This section reviews several extensions in the literature and compares them. In \cref{sec:impossibility}, we combine these restrictions to strengthen our impossibility result.

Throughout this section, the restrictions concern responses to assignment and leave measured potential outcomes unrestricted. \Cref{sec:choice} gives them choice-theoretic foundations, which help interpret the restrictions and the proposed design.

\subsection{Ordered monotone compliance}\label{sec:ordered}

When the arms have an order, one way to extend monotonicity is to require received treatment to be nondecreasing in assignment \citep{ai1995,v2006}. The restriction uses the order of the arms to limit the direction of responses.

\begin{definition}[Ordered monotone compliance] \label{def:ordered}
A compliance function is \emph{ordered monotone} if \(z<z'\) implies \(R(z)\le R(z')\).
\end{definition}

In our setting where the random assignment serves as the instrument, assignment and treatment share the same support and arm ordering, so it is natural that the admissible family contains the full-complier map \(R(z)=z\).
The possibility of full compliance fixes the direction of every pairwise comparison: the common direction can neither differ across pairs of assignment values nor be reversed by relabeling them.
On such families, ordered monotonicity is therefore equivalent to the pairwise common-direction restriction of \citet{ia1994}, extended to ordered treatments by \citet{ai1995} and stated as Assumption~A-2 of \citet{hp2018}, who also call it ordered monotonicity.

Our assignment is also a single scalar instrument, which removes the distinctions that arise with several instruments.
In the terminology of \citet[Sections~II.A and~II.D]{mtw2021}, interpreting their binary-treatment conditions in the ordered-treatment sense, actual monotonicity is ordered monotonicity restated, partial monotonicity coincides with Imbens--Angrist (IA) monotonicity because there is one instrument, and IA monotonicity coincides with ordered monotonicity by the previous paragraph.
The four notions therefore reduce to \Cref{def:ordered}.

\subsection{Stable and strictly stable compliance}\label{sec:stable}

A different restriction concerns the assignment being ``encouragement'' to take that arm, which can be defined without introducing an order on the arms.
We call compliance stable if a subject complies with an assigned arm that is her destination of noncompliance when assigned to some other arm.
Mathematically, we can also frame this as robustness to reassignment, although we stress that this notion is conceptually independent of whether reassignment is practically relevant.
Strict stability further requires every deviation by that subject to lead to the same treatment.

\begin{definition}[Stability]
A compliance function is \emph{stable} if \(R(R(z))=R(z)\) for every \(z\). It is \emph{strictly stable} if all nonfixed assignments have one common destination:
\[
R(z)\ne z,\ R(z')\ne z'
\quad\Longrightarrow\quad R(z)=R(z').
\]
\end{definition}

Trivially, strict stability implies stability. If a nonfixed assignment goes to \(q\), then \(q\) must be fixed. Otherwise, its deviation would also have to go to \(q\).

These restrictions also connect to existing generalizations of monotonicity. If a subject receives \(d\) under any assignment, stability requires her to receive \(d\) when assigned to \(d\). It therefore satisfies \emph{generalized monotonicity} in \citet[Assumption~2.2]{bhmsv2025}, with each arm's own assignment as its maximally encouraging value \citep[Example~4.1]{bhmsv2025}.
Strict stability requires all deviations by a subject to go to one treatment, as in the default-treatment restriction of \citet[Assumption~2.3]{bht2025}.

Stability is also the monotonicity condition of \citet[Assumption~1d, p.~827]{hhklm2026}, who require that anyone choosing field $k$ under another offer also chooses $k$ when offered $k$; with three fields their admitted compliance types (\citealp[Table~A1]{hhklm2026}, online appendix) correspond exactly to the ten stable compliance functions.%
\footnote{The weaker condition of \citet[Assumption~4, p.~1065]{klm2016} restricts only responses to the reference offer.}

\subsection{Unordered monotone compliance}\label{sec:unordered}

The restrictions so far describe how an individual responds to different assignments. Unordered monotonicity instead compares responses across individuals by restricting a collection of compliance functions \citep{hp2018}. For each treatment \(d\), define its assignment basin
\[
A_d(R)\coloneqq\{z\in\K:R(z)=d\}.
\]

\begin{definition}[Unordered monotonicity]
A collection \(\Rs\) is \emph{unordered monotone} if, for every \(d\), the sets \(\{A_d(R):R\in\Rs\}\) form a chain under inclusion.
\end{definition}

Inclusion is weak, and the ordering of compliance functions may differ by destination. Equivalently, for every treatment \(d\) and pair of assignments \(z,z'\), the change from \(\ind{R(z)=d}\) to \(\ind{R(z')=d}\) has the same weak sign for every compliance type. Incomparable basins would give two types with opposite responses to this change.

To relate this restriction to an experimental setup, we also require that the collection contain at least the compliance function of a full complier, the identity map \(I\). Without this requirement, a singleton collection containing any compliance function is unordered monotone. With the identity included, every nonempty basin \(A_d(R)\) must contain \(d\), since it must be comparable with \(A_d(I)=\{d\}\). Every compliance function in the collection is therefore stable.

\subsection{Comparing the restrictions}\label{sec:comparison}

The above restrictions limit compliance in different ways. Neither ordered monotonicity nor stability implies the other. Strict stability is stronger than stability, and unordered monotonicity of a set with the identity implies stability of every member thereof. \Cref{tab:responses} illustrates the pointwise distinctions.

\begin{table}[t]
\centering
\caption{Examples of individual compliance functions.}\label{tab:responses}
\small
\begin{tabular}{cccc}
\toprule
\midrule
Compliance function & Ordered & Stable & Strictly stable\\
\midrule
\((1,2,3)\)   & Yes & Yes & Yes\\
\((2,3,3)\)   & Yes & No  & No\\
\((3,2,3)\)   & No  & Yes & Yes\\
\((1,1,3,3)\) & Yes & Yes & No\\
\bottomrule
\end{tabular}
\begin{minipage}{0.96\textwidth}
\footnotesize\singlespacing
\emph{Notes:} The \(z\)th entry is \(R(z)\). The fourth row has four arms; the others have three. Unordered monotonicity is a property of a collection and is therefore not assigned a column.
\end{minipage}
\end{table}

Unordered monotonicity adds a restriction even when every individual function is strictly stable and ordered monotone. For example, the functions \((2,2,3)\) and \((1,2,2)\) each have these properties, but cannot belong to the same unordered-monotone collection. Their treatment-2 basins, \(\{1,2\}\) and \(\{2,3\}\), are not nested. Conversely, the identity together with \((1,1,3,3)\) is unordered monotone, although the second map is not strictly stable.

For the impossibility result, we combine these restrictions.
This is because starting from a stronger restriction makes an impossibility result more powerful.
The impossibility will then apply to any weaker compliance assumptions, such as when only ordered monotonicity is imposed.
To be precise, we use inclusion-maximal unordered-monotone collections within the class of strictly stable, ordered monotone compliance functions. A collection is \emph{inclusion-maximal} if no further function from that class can be added without violating unordered monotonicity.
This specifies the possible types that cannot be excluded from consideration, but importantly, it does not mean that all types in the set must exist in a particular population.
The next section shows that even with these joint restrictions, margin-specific effects are not identified.

\section{Impossibility in a conventional experiment}\label{sec:impossibility}

We first consider a conventional RCT in which the researcher observes \((Y_i,D_i,Z_i)\). Assignment is randomized unconditionally, as specified in the following assumption.

\begin{assumption}[Random assignment]\label{ass:random}
\[
Z_i\ \perp\ (X_i,\mathbf Y_i),\qquad
P(Z_i=z)>0\quad\text{for every }z\in\K.
\]
\end{assumption}

The compliance restrictions in the previous section limit how subjects can respond, but they do not solve the problem of comparing the same group under two treatments. Random assignment makes assignment groups comparable, whereas the groups receiving different treatments may differ. We first show that, for the classes of compliance functions considered here, assignment-cell contrasts cannot recover the mean outcome for any common nonempty group under two treatments. We then connect this failure to two populations with identical observables and different group effects, which rules out identification by other methods as well.

\subsection{Recovering the same group under two treatments}

To determine which groups we can recover, consider weighted sums of observed outcome moments. For any coefficient vector \(b\in\mathbb R^K\), random assignment and exclusion imply
\[
\sum_{z\in\K}b_z\E[Y_i\ind{D_i=t}\mid Z_i=z]
=\E\left[Y_{it}\sum_{z:R_i(z)=t}b_z\right].
\]
Each coefficient vector weights \(Y_{it}\) by compliance type. To compare \(t\) and \(s\) for the same group, the weights must agree:
\begin{equation}\label{eq:alignment}
\sum_{z:R(z)=t}b_z=\sum_{z:R(z)=s}c_z
\quad\text{for every allowed }R.
\end{equation}
The binary LATE contrasts have the complier indicator as their common weight. In the class below, only zero coefficients satisfy \eqref{eq:alignment}.

\begin{lemma}[No nonzero group alignment]\label[lemma]{lem:alignment}
Let \(K\ge3\). Suppose \(\Ss\) is inclusion-maximal unordered monotone in the class of strictly stable, ordered monotone compliance functions. For every pair \(t\ne s\), real vectors \(b\) and \(c\) satisfying \eqref{eq:alignment} for all \(R\in\Ss\) must have \(b=c=0\).
\end{lemma}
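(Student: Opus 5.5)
The plan is to first pin down what the class of strictly stable, ordered monotone compliance functions looks like. Then I would use inclusion-maximality to show that \(\Ss\) contains a rich, explicitly describable family, and plug members of this family into \eqref{eq:alignment} one at a time.

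\emph{Step 1 (structure of the class).} Let \(R\) be ordered monotone and strictly stable. If \(R\) is not the identity, all nonfixed assignments share one destination \(q\), and \(q\) is fixed. Monotonicity of \(R\) together with \(R(z)=z\) off the nonfixed set forces the preimage \(R^{-1}(q)\) to be an interval \(J\subseteq\K\) containing \(q\), with \(R(z)=z\) for \(z\notin J\). Conversely, every such ``collapse of an interval \(J\ni q\) onto \(q\)'' lies in the class. So every member other than the identity \(I\) is determined by a pair \((q,J)\). Its basins are \(A_q(R)=J\), \(A_d(R)=\{d\}\) for \(d\notin J\), and \(A_d(R)=\emptyset\) for \(d\in J\setminus\{q\}\).

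\emph{Step 2 (what maximality gives).} For destination \(d\), the only basins appearing across the class are \(\emptyset\), \(\{d\}\), and intervals containing \(d\). The sets \(\emptyset\) and \(\{d\}\) are comparable with every nonempty basin for \(d\), since any such basin contains \(d\). Hence adding a collapse \((q,J)\) to a collection can only create a conflict in the chain for destination \(q\), through its basin \(J\). Two consequences follow. First, \(I\in\Ss\), and the constant maps (\(J=\K\)) all belong to \(\Ss\). Second, for each \(q\), the intervals \(J\) with \((q,J)\in\Ss\) form a chain of intervals containing \(q\) that cannot be refined. This chain is therefore saturated: \(\{q\}=J_0\subset J_1\subset\cdots\subset J_{K-1}=\K\), with each \(J_{k}\) adding exactly one arm to \(J_{k-1}\). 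The reason is that if two consecutive members differed by two or more points, one could insert an intermediate interval, still containing \(q\), while keeping the chain property, contradicting maximality. I expect this step to be the main point needing care. The key observation making it go through is that compatibility decouples across destinations.

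\emph{Step 3 (solving the linear system).}
\begin{itemize}
\item From \(R=I\) in \eqref{eq:alignment} we get \(b_t=c_s\).
\item Because \(K\ge3\), pick \(q\notin\{t,s\}\) and walk up its saturated chain. At the first \(J_k\) that contains exactly one of \(t,s\), the corresponding collapse has one of \(A_t,A_s\) empty and the other a singleton. This yields \(c_s=0\) or \(b_t=0\); in either case \(b_t=c_s=0\).
\item Next walk up the chain for destination \(t\). For \(J_k\not\ni s\), \eqref{eq:alignment} reads \(\sum_{z\in J_k}b_z=c_s=0\). For \(J_k\ni s\) (and \(s\ne t\)), \(A_s=\emptyset\), so \(\sum_{z\in J_k}b_z=0\). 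Differencing consecutive members of the saturated chain kills the single added coordinate each time, so \(b_z=0\) for all \(z\).
\item The symmetric argument along the chain for destination \(s\) gives \(c=0\).
\end{itemize}
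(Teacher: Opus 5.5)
Your proof is correct and follows the paper's argument. It uses the same interval-collapse characterization, the same decoupling of unordered monotonicity across destinations, saturation of each chain by maximality, and differencing along the \(t\)- and \(s\)-chains together with one interval from a third destination's chain. The only difference is order: you use the third destination first to get \(b_t=c_s=0\) and then difference to reach \(b=c=0\) directly, whereas the paper first derives \(b=\alpha(e_t-e_s)\) and \(c=\alpha(e_s-e_t)\) and then forces \(\alpha=0\).
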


The identity belongs to every such maximal collection. Its basin for treatment \(d\) is \(\{d\}\), which is comparable with every basin of a stable compliance function. Adding the identity can therefore never violate the restrictions.

The other compliance functions in this class collapse an interval of assignments to one destination \(q\) and leave the remaining assignments fixed:
\begin{equation}\label{eq:collapse}
M_{q,A}(z)=
\begin{cases}
q,&z\in A,\\
z,&z\notin A,
\end{cases}
\qquad A=[\ell,r],\quad \ell\le q\le r.
\end{equation}
For each destination, unordered monotonicity requires these intervals to form a chain. Maximality requires the chain to start at \(\{q\}\), end at \(\K\), and add one assignment at each step. The chains for different destinations can be chosen separately because, under a collapse in \eqref{eq:collapse}, any treatment other than \(q\) has either an empty basin or its singleton basin.

Fix the two treatments \(t,s\). Successive differences along the \(t\)-chain and the \(s\)-chain force
\begin{equation}\label{eq:binary-form}
b=\alpha(e_t-e_s),\qquad c=\alpha(e_s-e_t).
\end{equation}
Here \(e_d\) denotes the \(d\)th coordinate vector in \(\mathbb R^K\). These are the binary contrasts. The chain for a third destination contains an interval with exactly one of \(t,s\). Collapsing that interval empties one treatment's basin and leaves the other's singleton basin, forcing \(\alpha=0\). The argument applies to every pair without relabeling the ordered arms. \Cref{app:impossibility} gives the details.

For example, with three arms every maximal collection contains
\[
I,\ (1,1,3),\ (1,1,1),\ (1,3,3),\ (3,3,3),\ (2,2,2),
\]
and exactly one of \((2,2,3)\) and \((1,2,2)\). For the pair \((1,2)\), the map \((1,3,3)\) has treatment-1 basin \(\{1\}\) and an empty treatment-2 basin. In words, a subject of this type behaves like a full complier when assigned 1 but never takes 2, so any contrast weights her \(Y_{i1}\) exactly as it weights a full complier's while giving her \(Y_{i2}\) no weight; alignment then forces that common weight to zero.
When \(K=2\), there is no third destination, and the nonzero complier contrast remains.

\subsection{From group alignment to nonidentification}

So far, the argument has ruled out assignment-cell contrasts that recover a common group. To rule out identification by any method, we need to connect this failure to observationally indistinguishable populations. With outcomes unrestricted, failure of group alignment allows us to change potential outcomes without changing observables. For groups specified without potential outcomes, the appendix constructs two populations with the same pre-assignment characteristics and the same distribution of \((Y,D,Z)\), but different ATEs for the same group \(G\). Since the data cannot distinguish these populations, they cannot determine the group effect.

This connection between group alignment and identification also appears in general identification criteria. \citet[Proposition~1 and Theorem~2]{goff2025} shows that identifying an effect for a compliance-type group requires its weights to lie in both treatment-specific row spaces. \citet{nps2026} give more general operator and weighting characterizations. In particular, Remark~4.3 covers ratios for conditional means. We give a direct proof using the finite set of compliance functions and allow groups to depend on other latent characteristics.

To include outcome-defined subpopulations, we must exclude effects fixed by their definitions. Call a group effect \emph{nontrivial} if it can take different values across admissible populations in which it is defined. For example, the effect for subjects satisfying \(Y_{it}=Y_{is}\) is always zero.

\begin{theorem}[Impossibility of margin-specific group ATE identification]\label{thm:impossibility}
Maintain \cref{ass:outcomes,ass:random}. Suppose \(K\ge3\) and \(\Rs\) contains a collection \(\Ss\) satisfying the restrictions of \cref{lem:alignment}. Then no nontrivial margin-specific group ATE is identified, either for groups specified without potential outcomes or for subpopulations defined by a fixed measurable criterion in \((X_i,\mathbf Y_i)\).
\end{theorem}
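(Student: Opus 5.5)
We prove nonidentification by construction: given any admissible population \(P\) and any group \(G\), we build a second admissible population \(\tilde P\) with the same observable law of \((Y,D,Z)\) and the same law of \(X_i\), but with a different value of \(\tau_G(t,s)\). The argument is a finite-dimensional duality. For a given law of \(X_i\), and hence of compliance types, the observable distribution of \((Y,D)\) given \(Z=z\) is determined by the type-by-treatment mixtures \(\sum_{R:R(z)=d} P(R)\,\mathcal L(Y_{id}\mid R)\). The target, meanwhile, depends on the \(G\)-weighted means of \(Y_{it}\) and \(Y_{is}\).

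The plan is to perturb only the mean potential outcomes under \(t\) and \(s\), by types in \(\Ss\) and possibly within types along other latent characteristics. The perturbation must leave every assignment-cell moment \(\E[Y_i\ind{D_i=d}\mid Z_i=z]\) unchanged and shift \(\tau_G(t,s)\).

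\emph{Step 1: reduce to a linear system.} Group subjects by their compliance function \(R\in\Ss\). Let \(g_R\) be the \(G\)-mass on type \(R\) and \(p_R\) the population mass. Consider shifting every \(Y_{it}\) of type \(R\) by \(\delta^t_R\), and similarly \(\delta^s_R\). Shifting the whole conditional law by a constant preserves admissibility, since outcomes are unrestricted.

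The observables are preserved if and only if two conditions hold for every \(z\):
\[
\sum_{R:R(z)=t}p_R\delta^t_R=0,\qquad \sum_{R:R(z)=s}p_R\delta^s_R=0.
\]
These conditions preserve the full distribution, not just the means, because the shift is a translation within each type. To keep whole distributions unchanged, we instead use a mixture-reallocation version: we swap conditional outcome laws across types in a way that preserves each cell mixture. The mean version conveys the dimension count.

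The target changes by \(\sum_R g_R(\delta^t_R-\delta^s_R)\). By linear-algebra duality, a perturbation that preserves the observables but changes the target exists unless the target functional lies in the row space of the observation constraints. That is exactly the condition that there exist \(b,c\) with
\[
g_R=p_R\sum_{z:R(z)=t}b_z=p_R\sum_{z:R(z)=s}c_z
\]
on the support. This is \eqref{eq:alignment} weighted by \(p_R\). Applying \cref{lem:alignment}, with the populations chosen to give positive mass to all of \(\Ss\), forces \(g=0\). So any nonzero group is not identified.

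\emph{Step 2: handle general \(G\) and supports.} A given \(P\) may put zero mass on some types, or \(G\) may not be dominated by type proportions. We first replace \(P\) by an observationally equivalent population, or argue that identification must hold uniformly over admissible laws. Concretely, we mix in an \(\varepsilon\)-mass of every type in \(\Ss\) and compare two such populations, so that the alignment must hold on all of \(\Ss\). Within-type variation of \(G\) across latent characteristics is handled by shifting outcomes uniformly within each type, so only \(g_R\) matters.

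\emph{Step 3: outcome-defined subpopulations.} For a fixed event \(A\), the group itself moves with the outcomes. Here we use nontriviality: there exist admissible populations \(P_1,P_2\) where the effect differs. We then connect them via observationally equivalent modifications, by showing that the observable law can be matched while the conditional law of \(\mathbf Y\) given type is altered freely on the \(t,s\) coordinates subject only to the cell-mixture constraints.

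I expect this last step to be the main obstacle. It requires a distributional (not merely mean) version of Step 1: by the lemma, the cell-mixture constraints leave enough freedom to realize the joint laws of both \(P_1\) and \(P_2\) behind a common observable law. This would be done by constructing a common observable law and pushing types' \((Y_t,Y_s)\) laws through the free directions.
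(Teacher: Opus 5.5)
Your argument for groups specified without potential outcomes follows the paper's route in outline: row-space duality plus \cref{lem:alignment}, populations that charge every type, and outcome laws that depend only on the type. However, the distributional step is misstated and never supplied. Type-specific translations with $\sum_{R:R(z)=t}p_R\delta^t_R=0$ do \emph{not} preserve the law of $(Y,D,Z)$. Translating mixture components by different amounts changes the mixture even when the weighted shifts cancel, so only first moments are matched.

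The ``mixture reallocation'' has to be made concrete, and the paper's construction does this. Fix $y$ with $y_t\ne y_s$, and let $y^{(t)}$ be $y$ with its $t$th coordinate replaced by $y_s$. In every type put a two-point law on $y$ and $y^{(t)}$ with weights $\tfrac12\pm\epsilon v_R/p_R$, where $v\in\ker\Gamma_t$ and $[\Gamma_t]_{zR}=\ind{R(z)=t}$. Since $y$ and $y^{(t)}$ differ only in coordinate $t$, the only observable change is the split between the atoms $(y_t,t)$ and $(y_s,t)$ in assignment cell $z$. That split moves by $\epsilon(\Gamma_tv)_z=0$, while $\tau_G$ moves by $\epsilon(y_t-y_s)\sum_R(g_R/p_R)v_R$. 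This is exactly your linear system, so your duality argument then goes through (repeat with $s$ in place of $t$).

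Your opening claim of nonidentification at \emph{every} admissible $P$ is also false. If everyone is a full complier, the population ATE is pinned down. As you later concede, you must choose the pair of populations.

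The genuine gap is the outcome-defined case, which you flag but do not resolve. Your plan to place the laws $P_1,P_2$ from nontriviality ``behind a common observable law'' has no mechanism and cannot work in general. $P_1$ and $P_2$ may differ in observables such as $P(D=d\mid Z=z)$, which no reallocation of $(Y_t,Y_s)$ laws can change. Moreover, membership in $A$ moves with outcomes. So $\tau_A=N/P(A)$, with $N(P)=\E_P[\ind{A}(Y_t-Y_s)]$, is a ratio whose numerator and denominator both respond to your perturbations, and the linear duality does not apply to it directly.

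The missing idea is to argue by contradiction and linearize, as the paper does.
\begin{itemize}
\item Suppose $\tau_A$ were identified. Take observationally equivalent $P,P'$ and any $Q$ with $Q(A)>0$. The mixtures $(1-\epsilon)Q+\epsilon P$ and $(1-\epsilon)Q+\epsilon P'$ are admissible and observationally equivalent.
\item Expanding $\tau_A$ as $\epsilon\downarrow0$ gives $N(P)-N(P')=\tau_A(Q)[P(A)-P'(A)]$.
\item Nontriviality supplies two $Q$ with different $\tau_A(Q)$, so $N$ and $P(A)$ are each identified.
\item $N$ is linear in the law with integrand $\ind{(x,y)\in A}(y_t-y_s)$, which vanishes at $y^{(t)}$. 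Run the two-point construction on populations with one cell $X=x_j$ per compliance function and arbitrary representatives $x_j$. It forces this integrand to vanish at every allowed $(x,y)$.
\end{itemize}
Hence $\tau_A\equiv0$ wherever it is defined, contradicting nontriviality.
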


\begin{remark}[Linear combinations]\label{rem:signed-groups}
The proof in \cref{app:impossibility} also rules out identification of linear combinations with negative coefficients. For fixed groups \(G_\ell\) specified without potential outcomes, real coefficients \(a_\ell\), and a common treatment margin \((t,s)\), the contrast \(\sum_{\ell=1}^m a_\ell\tau_{G_\ell}(t,s)\) is unidentified whenever \(\sum_{\ell=1}^m a_\ell G_\ell\ne0\). This excludes combinations that cancel identically.
\end{remark}

For groups specified without potential outcomes, the theorem extends the alignment argument beyond compliance-defined groups. A group may be defined by preferences, costs, or other pre-assignment characteristics. The proof first chooses a population containing every allowed compliance type and representing the target group. It then changes potential outcomes while keeping population composition and the group fixed.
The theorem also applies to any larger compliance class containing \(\Ss\).
For example, the theorem applies to the ordered monotone class.

Stronger restrictions on compliance may escape the nonidentification result.
One-sided noncompliance and extended monotonicity remove the troublesome compliance functions used in the proof and can identify some ATEs, as discussed in the introduction.
Identification may also hold under additional restrictions on potential outcomes, as in \citet[Proposition~10(3)]{ls2024}.

It is also useful to distinguish this failure of identification from a failure of a particular estimator or a change in the target effect. \citet{bs2024} characterize when multi-treatment two-stage least squares (2SLS) coefficients put nonnegative weights on their intended treatment contrasts and zero weights on other contrasts. Their examples show that other methods can identify effects when 2SLS lacks this interpretation.
Our theorem differs in two ways. First, it is a general identification problem and applies to any choice of the estimator. Second, it concerns \emph{margin-specific} group effects. Switcher averages that combine different margins, as in \citet{ast2025}, are \emph{across-arm} effects and fall outside the scope of \Cref{thm:impossibility}.

Finally, we emphasize that \Cref{thm:impossibility} concerns \emph{point} identification; it does not state that the data are uninformative for \emph{partial} identification of treatment effects.
In fact, there is a literature on partial identification of treatment effects for groups defined via principal strata.
\citet{cs2006} derive bounds on causal effects within principal strata for three-arm trials with noncompliance.
\citet{bhmsv2025} characterize the identified set for ATEs under stability and under any stronger restriction that leaves outcomes unrestricted, so the population ATE is already not point identified under the classes covered by \cref{thm:impossibility}. Our result extends this to every group, including the complier groups that LATE arguments target.

\section{Choice-theoretic foundations of compliance}\label{sec:choice}

\Cref{thm:impossibility} shows that even a strong combination of existing restrictions does not restore identification in a conventional experiment. Noncompliance of this kind that goes beyond extended monotonicity, however, does not presuppose irrational or unpredictable behavior.
Even a mildly restrictive model of rational choice can produce enough noncompliance to ruin identification in \Cref{thm:impossibility}.
Such rationality, however, motivates integrating some incentive mechanisms to control the deviation directions.

With that in mind, we first examine how much noncompliance a reasonable model of choice can generate.
We begin with a choice model that is deliberately kept general, while respecting the structure compatible with SUTVA and the exclusion restriction.
This is intended to exhaust the compliance types that can arise from such choices.
We then show how the extensions discussed earlier in \cref{sec:models} can be reproduced by additional structural assumptions on this model.

\subsection{Acyclic compliance}\label{sec:acyclic}

Let \(B_{id}\) be subject \(i\)'s perceived benefit from treatment \(d\). If assigned to \(z\), she pays a deviation cost \(C_{izd}\) to receive another treatment \(d\). Taking the assigned treatment costs nothing, and every deviation incurs a positive cost:
\begin{equation}\label{eq:choice-model}
R_i(z)\in\argmax_{d\in\K}B_{id}-C_{izd},
\qquad C_{izd}>C_{izz}=0\quad(d\ne z).
\end{equation}
Benefits and costs may differ across subjects. Ties may be resolved by any deterministic, individual-specific rule, which may depend on the assignment.

The model describes treatment choice, so its benefits need not coincide with the outcomes the researcher studies. We do not impose structural restrictions on the relationship between the subject's perceived payoff, \(B_{id}\), and the outcome measured in the experiment, \(Y_{id}\).
In cases where \(Y_{id}\) measures outcomes related to the subject's benefits, \(B_{id}\) and \(Y_{id}\) may be strongly associated; in other cases, they may be completely unassociated.
Equation \eqref{eq:choice-model} assumes that perceived benefits depend on treatment but not assignment. This assumption naturally aligns with SUTVA and exclusion, although they concern different aspects of the experiment.

Allowing deviation costs to depend on assignment accommodates differences in how difficult a reassignment is to obtain. Consider, for example, arms offering cash transfers of \(\$0\), \(\$500\), and \(\$1{,}000\). Increasing a transfer from \(\$0\) to \(\$500\), or from \(\$500\) to \(\$1{,}000\), costs the administrator an additional \(\$500\). Moving from \(\$0\) to \(\$1{,}000\) costs an additional \(\$1{,}000\). The larger expense may make reassignment harder for the subject to obtain. The cost in \eqref{eq:choice-model} is the subject's cost of obtaining it, which need not equal the administrator's expense, but an unavailable reassignment can be represented by a cost high enough that it is never chosen.

Costs can also depend on assignment because assignment can change how a subject views a request for another treatment. She may dislike appearing greedy, and the same request may seem more appropriate under one assignment than another. Procedures and physical access may also depend on the assigned arm. Studies of context\hyp{}dependent choices \citep{kw2013,l2007,b2008} and the social cost of declining requests \citep{dlm2012} motivate allowing assignment\hyp{}dependent \(C_{izd}\).

Despite this flexibility in costs, the model rules out cycles of deviation. A subject deviates only if the other treatment offers a higher benefit:
\begin{equation}\label{eq:benefit-increase}
R_i(z)\ne z\quad\Longrightarrow\quad B_{i,R_i(z)}>B_{iz}.
\end{equation}
Since every deviation raises the subjective benefit, hypothetical repeated deviations cannot return to the starting arm.

\begin{definition}[Acyclic compliance]\label{def:acyclic}
A compliance function \(R\) is \emph{acyclic} if it has no cycle of length greater than one. Thus \(R^m(z)=R\circ\cdots\circ R(z)=z\) for some positive integer \(m\) implies \(R(z)=z\).
\end{definition}

Acyclicity still allows fixed points, which are the destinations of every chain of deviations. The following theorem shows that this gives a full characterization of compliance functions generated by the above choice model.

\begin{theorem}[Acyclic representation]\label{thm:acyclic}
A compliance function can arise from \eqref{eq:choice-model} if and only if it is acyclic. More precisely, arbitrary tie-breaking among maximizers in \eqref{eq:choice-model} yields an acyclic compliance function, and every acyclic compliance function admits a representation with a unique maximizer at each assignment in \eqref{eq:choice-model}.
\end{theorem}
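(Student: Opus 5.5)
The proof has two directions: choice-generated functions are acyclic, and every acyclic function has a choice representation with unique maximizers.

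\emph{Necessity.} Fix any tie-breaking rule and let \(R=R_i\) be generated by \eqref{eq:choice-model}. If \(R(z)=d\ne z\), optimality gives \(B_{id}-C_{izd}\ge B_{iz}-C_{izz}=B_{iz}\). Since \(C_{izd}>0\), this yields \(B_{id}>B_{iz}\), which is \eqref{eq:benefit-increase}. Now suppose \(R^m(z)=z\) for some \(m\ge1\) but \(R(z)\ne z\). Consider the orbit \(z_0=z,z_1=R(z_0),\ldots,z_m=z\). Each step is either a fixed step (\(z_{k+1}=z_k\)) or a strict increase in \(B_i\). Once a fixed point is reached, the orbit stays there, so \(z_m=z\) forces \(z\) to be that fixed point. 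This contradicts \(R(z)\ne z\). Hence no step is fixed, and \(B_i\) strictly increases along the whole orbit. That gives \(B_{iz}<B_{iz}\), a contradiction, so \(R\) is acyclic.

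\emph{Sufficiency.} Given an acyclic \(R\), first build benefits. Consider the functional graph with edges \(z\to R(z)\) for \(z\) not fixed. Acyclicity means every orbit terminates at a fixed point, because an orbit in a finite set is eventually periodic, and any periodic point \(w\) with period \(m\) satisfies \(R^m(w)=w\), hence \(R(w)=w\). So the graph is a forest of trees rooted at fixed points. Define \(h(z)\) as the number of steps from \(z\) to its root, and set \(B_{d}=-h(d)\). Then \(R(z)\ne z\) implies \(h(R(z))=h(z)-1\), so \(B_{R(z)}=B_z+1\).

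Next build costs. For \(R(z)\ne z\), set \(C_{z,R(z)}=1/2\), and set \(C_{zd}=M\) for every other \(d\ne z\), with \(M\) large, say \(M=K+1\). Set \(C_{zz}=0\). Check the maximizers. At a nonfixed \(z\), the value of \(R(z)\) is \(B_z+1/2\), staying gives \(B_z\), and any other arm gives at most \(0-M<B_z\), since \(B\in[-(K-1),0]\). At a fixed \(z\), staying gives \(B_z\), while any other arm gives at most \(-M<B_z\). In each case the maximizer is unique and equals \(R(z)\), and all deviation costs are positive as required.

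The main point needing care is the height function. We must confirm that acyclicity as stated in \cref{def:acyclic} (no cycles of length greater than one) rules out all nontrivial periodic orbits, not just two-cycles. This follows from the periodic-point argument above, and it makes \(h\) well defined. Everything else is bookkeeping with constants. Note also that the benefits are allowed to be arbitrary reals, which our construction uses.
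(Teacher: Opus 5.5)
Your proof is correct and follows the paper's argument essentially verbatim. Necessity uses the same inequality \(B_{R(z)}>B_z\) along a hypothetical cycle, and sufficiency uses the same height-function benefits \(B_z=-h(z)\) with cost \(1/2\) on the prescribed move and \(K+1\) on every other deviation; your periodic-point argument only makes explicit what the paper compresses into ``every orbit ends at a fixed point.''
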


The necessity follows from \eqref{eq:benefit-increase}. For sufficiency, assign higher benefits to arms closer to a fixed point, then choose costs that make each prescribed move optimal. The construction is in \cref{app:representations}.

\Cref{thm:acyclic} can be read as a revealed-preference result. Because every deviation is costly, a move from \(z\) to \(R(z)\) reveals \(B_{R(z)}>B_z\), as in \eqref{eq:benefit-increase}, whereas compliance reveals nothing, since deviation costs may be arbitrarily large. The classical conditions for choices to be rationalized by a preference ordering rule out cycles of revealed preference that contain a strict comparison (\citealp[p.~163]{h1950}; \citealp[Theorem~1, p.~639]{r1966}; \citealp[p.~74]{a1967}). Here every revealed comparison is strict, so the analogous condition rules out any cycle, which is acyclicity. Because costs may depend on assignment, nothing else ties together choices under different assignments, so any acyclic pattern of revealed comparisons is rationalized.

Acyclicity is therefore the weakest restriction on compliance that retains this rational-choice interpretation. It extends binary monotonicity, but it accommodates a broad range of behavior when there are more arms. With \(K=2\), it excludes precisely the defier. With more arms, it allows sequential deviation. For example, \(R=(2,3,3)\) describes a subject who would move from assignment 1 to treatment 2, but when assigned to 2, move to 3. She may choose this way because moving directly from 1 to 3 costs more than moving from 2 to 3.
This consideration is hypothetical, so it does not mean that the experiment must allow subjects to actually make these moves in sequence.
It is interesting to note that, while the formulation in \eqref{eq:choice-model} seems to encourage subjects to take the assignment by offering the lowest cost, it does not exclude sequential deviation that is ruled out in the encouragement design (stability).

\subsection{Representing existing extensions}\label{sec:costs}

Each extension in \cref{sec:models} corresponds to additional restrictions on the choice model in \eqref{eq:choice-model}.

\subsubsection{Ordered monotone compliance}

In the choice model, we obtain ordered monotone compliance (\cref{sec:ordered}) by requiring deviation costs to change systematically with assignment.
Namely, ordered monotonicity can be embedded with the strict submodularity of the cost matrix:
\begin{equation}\label{eq:monge}
C_{izd}+C_{iz'd'}<C_{izd'}+C_{iz'd}
\quad\text{whenever }z<z',\ d<d'.
\end{equation}
The incremental cost of moving to a higher treatment is lower at a higher assignment.
Counterintuitively, monotonicity of the perceived benefits is neither necessary nor sufficient.

\begin{theorem}[Ordered representation]\label{thm:ordered}
A compliance function can arise from \eqref{eq:choice-model} with a cost matrix satisfying \eqref{eq:monge} if and only if it is ordered monotone. More precisely, arbitrary tie-breaking among maximizers in \eqref{eq:choice-model} with \eqref{eq:monge} yields an ordered monotone compliance function, and every ordered monotone compliance function admits such a representation with a unique maximizer at each assignment in \eqref{eq:choice-model} with \eqref{eq:monge}.
\end{theorem}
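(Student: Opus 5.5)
The proof has two directions: \emph{necessity}, a short exchange argument, and \emph{sufficiency}, an explicit construction of benefits and costs.

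For necessity, fix any subject and any tie-breaking rule, and suppose for contradiction that \(z<z'\) but \(d\coloneqq R(z)>d'\coloneqq R(z')\). Optimality of \(d\) at assignment \(z\) gives \(B_d-C_{zd}\ge B_{d'}-C_{zd'}\). Optimality of \(d'\) at \(z'\) gives \(B_{d'}-C_{z'd'}\ge B_d-C_{z'd}\). Adding the two inequalities cancels the benefits and yields \(C_{zd'}+C_{z'd}\ge C_{zd}+C_{z'd'}\). Since \(z<z'\) and \(d'<d\), condition \eqref{eq:monge} applied to the pair \((d',d)\) says exactly the reverse strict inequality, a contradiction. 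This argument uses only weak optimality, so it covers arbitrary tie-breaking, and it never uses the positivity of costs.

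For sufficiency, fix an ordered monotone \(R\). I first note that \(R\) is acyclic, since orbits of a nondecreasing self-map of a finite chain are monotone, although the construction below may not need this. My plan is to separate the roles of the two ingredients. The first is a ``score'' \(S_{zd}\) whose unique row-maximizer is \(R(z)\) and which is strictly supermodular in \((z,d)\). A natural choice is
\[
S_{zd}=-\lambda\bigl(d-R(z)\bigr)^2-\varepsilon(d-z)^2 .
\]
Its cross difference equals \(2\lambda(R(z')-R(z))(d'-d)+2\varepsilon(z'-z)(d'-d)\), which is strictly positive for \(z<z'\) and \(d<d'\) because \(R\) is nondecreasing and \(\varepsilon>0\). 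With \(0<\varepsilon<\lambda\), the row maximizer is unique and equals \(R(z)\), by an elementary comparison of the two quadratics. The second ingredient is the pair of benefits \(B\) and costs \(C\), obtained by setting
\[
C_{zd}=B_d-B_z+S_{zz}-S_{zd}.
\]
Then \(C_{zz}=0\), and \(B_d-C_{zd}\) differs from \(S_{zd}\) only by a row constant, so the argmax is unchanged. Moreover, the terms \(B_d-B_z+S_{zz}\) are additively separable and do not affect the cross differences. Hence \eqref{eq:monge} holds and each assignment has a unique maximizer.

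The main obstacle is positivity, \(C_{zd}>0\) for \(d\ne z\), which requires \(B_d-B_z>S_{zd}-S_{zz}\) for all \(d\ne z\). Imposing this on both \((z,d)\) and \((d,z)\) requires \(S_{zd}-S_{zz}+S_{dz}-S_{dd}<0\). I would try two repairs, in order. First, I would use the remaining freedom in the score: add to \(S\) any function of \(d\) alone, or any row constant, and tune \(\lambda\) and \(\varepsilon\). The goal is that each two-cycle sum above is negative. This looks plausible because a cycle \(z\to d\to z\) of strict score improvements cannot occur for a nondecreasing, acyclic \(R\). Second, I would choose \(B\) by the potential argument behind \cref{thm:acyclic}. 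The required inequalities form a system of difference constraints \(B_z-B_d<-(S_{zd}-S_{zz})\), which is feasible if and only if every directed cycle has negative total weight. The planned repair is to adjust \(S\) so that all cycle sums of \(S_{zd}-S_{zz}\) are negative while supermodularity and the row maximizers are kept. A likely route is adding \(-\eta\lvert d-z\rvert\)-type penalties, whose cross differences are nonnegative, so they preserve \eqref{eq:monge}.
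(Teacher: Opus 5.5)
Your necessity argument is the same exchange argument the paper uses, and it is correct. Your sufficiency construction also follows the paper's outline: a score with unique row maximizer \(R(z)\) and strictly increasing differences, renormalized by \(C_{zd}=B_d-B_z+S_{zz}-S_{zd}\), which preserves both the argmaxes and the cross differences. One small slip: \(0<\varepsilon<\lambda\) does not make \(R(z)\) the row maximizer once \(K\ge3\). For example, take \(K=3\), \(R\equiv1\), \(\lambda=1\), \(\varepsilon=0.9\); row \(z=3\) gives \(S_{32}=-1.9>S_{31}=-3.6\). A condition such as \((2K-3)\varepsilon<\lambda\) suffices. The paper uses \(Q_{zd}=-L(d-R(z))^2+zd\) with \(L>K\).

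The genuine gap is positivity, \(C_{zd}>0\) for \(d\ne z\). You correctly identify it as the crux, but you only list repairs you would try. As written, the proposal never shows that the constructed costs are admissible, so sufficiency is not proved. The missing idea is that no repair is needed. Strict supermodularity of the score, which you already verified, is enough by itself.

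The paper's argument runs as follows. Let \(\Delta_j(z)=S_{z,j+1}-S_{zj}\), which is strictly increasing in \(z\). Set \(B_1=0\) and pick \(\Delta_j(j)<B_{j+1}-B_j<\Delta_j(j+1)\). Telescoping then gives, for \(d>z\), \(B_d-B_z>\sum_{j=z}^{d-1}\Delta_j(j)\ge\sum_{j=z}^{d-1}\Delta_j(z)=S_{zd}-S_{zz}\). For \(d<z\), it gives \(B_z-B_d<\sum_{j=d}^{z-1}\Delta_j(j+1)\le\sum_{j=d}^{z-1}\Delta_j(z)=S_{zz}-S_{zd}\).

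Your difference-constraint route would also close, for the same reason. Your two-cycle condition \(S_{zd}-S_{zz}+S_{dz}-S_{dd}<0\) is exactly the strict cross-difference inequality on rows and columns \(\{z,d\}\). A longer simple cycle has weight equal to a nonidentity permutation sum minus the identity sum on a principal submatrix. That weight is negative because uncrossing swaps strictly raise the sum of a strictly supermodular array.

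Your heuristic that acyclicity of \(R\) makes negative cycle sums plausible points the wrong way. With \(K=2\), \(S_{11}=0\), \(S_{12}=10\), \(S_{21}=0\), \(S_{22}=1\), the row maximizers give the ordered monotone \(R=(2,2)\). Yet the two-cycle sum is \(9>0\), so no choice of benefits makes both off-diagonal costs positive. Supermodularity, not the shape of \(R\), delivers positivity, so the \(-\eta\lvert d-z\rvert\) penalties are unnecessary.
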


The proof uses increasing differences in net payoffs. For the converse, it constructs costs satisfying both strict submodularity and the positive-cost normalization in \eqref{eq:choice-model}.

\citet{v2006} shows that ordered monotonicity is equivalent to an ordered discrete-choice model with random thresholds.
It is a population-level equivalence that maintains instrument independence, whereas \cref{thm:ordered} is an individual-level representation with no distributional assumption.

\subsubsection{Stable and strictly stable compliance} \label{sec:costs:stable}

In the choice model, stability (\cref{sec:stable}) follows from removing the assignment dependence that allowed sequential deviation in \cref{sec:acyclic}. Let deviation costs depend on the destination but not on the assignment:
\begin{equation}\label{eq:stable-model}
R_i(z)\in\argmax_{d\in\K}
B_{id}-C_{id}\ind{d\ne z},
\qquad C_{id}>0.
\end{equation}
Taking the assigned treatment costs nothing. The cost of receiving any other treatment depends only on that treatment.

\begin{theorem}[Stable representations]\label{thm:stable}
A compliance function can arise from \eqref{eq:stable-model} with arbitrary tie-breaking if and only if it is stable.
It can arise from \eqref{eq:stable-model} with a unique maximizer at every assignment if and only if it is strictly stable.
\end{theorem}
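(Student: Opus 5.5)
We prove the two equivalences in \cref{thm:stable} separately. Each has a necessity direction, which follows from comparing payoffs, and a sufficiency direction, which requires building benefits and costs. Throughout, write the net payoff at assignment \(z\) as \(U_z(d)=B_d-C_d\ind{d\ne z}\). The key structural fact is that, for \(d\ne z\), \(U_z(d)=B_d-C_d\) does not depend on \(z\), while \(U_z(z)=B_z\) exceeds \(B_z-C_z\).

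\emph{Necessity of stability under arbitrary tie-breaking.} Suppose \(R(z)=q\ne z\). Then \(B_q-C_q\ge U_z(d)\) for every \(d\). In particular, \(B_q-C_q\ge B_d-C_d\) for all \(d\ne z\), and \(B_q-C_q\ge B_z\). Now consider assignment \(q\). There \(U_q(q)=B_q>B_q-C_q\). For \(d\ne q\), we have \(U_q(d)=B_d-C_d\), and this is at most \(B_q-C_q<B_q\): for \(d\ne z\) this is the first inequality, and for \(d=z\) we use \(B_z-C_z<B_z\le B_q-C_q\). Hence \(q\) is the unique maximizer at \(q\), so \(R(R(z))=R(z)\) regardless of tie-breaking.

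\emph{Necessity of strict stability under unique maximizers.} Suppose \(R(z)=q\ne z\) and \(R(z')=q'\ne z'\) with \(q\ne q'\). Uniqueness at \(z\) gives \(B_q-C_q>B_{q'}-C_{q'}\) whenever \(q'\ne z\). Uniqueness at \(z'\) gives the reverse strict inequality whenever \(q\ne z'\). If both side conditions hold, we have a contradiction. The remaining cases need care. If \(q'=z\), then \(z\) is a nonfixed point of \(R\) that is also the image of \(z'\). This contradicts stability, which we have just shown, because \(R(z)=R(R(z'))\) must equal \(R(z')=z\), whereas \(R(z)=q\ne z\). The case \(q=z'\) is symmetric.

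\emph{Sufficiency for stable \(R\) with tie-breaking.} Let \(F\) be the set of fixed points of \(R\). By stability every image lies in \(F\). Set \(B_d=0\) for \(d\notin F\), and give all fixed points a common benefit \(B_d=2\) with a common cost \(C_d=1\) for \(d\in F\). Give nonfixed arms a large cost. At \(z\in F\), the payoff is \(U_z(z)=2\), while any \(d\in F\setminus\{z\}\) gives \(1\), so \(z\) is uniquely optimal. At \(z\notin F\), \(U_z(z)=0\), every \(d\in F\) gives \(1\), and nonfixed arms give negative values. So all of \(F\) ties, and the tie-breaking rule, which may depend on the assignment, selects the prescribed \(R(z)\in F\). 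This works if \(F\ne\emptyset\), and \(F\) is nonempty since \(R(R(z))=R(z)\) gives a fixed point.

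\emph{Sufficiency for strictly stable \(R\) with unique maximizers.} If \(R\) is the identity, take any benefits and large costs. Otherwise there is a common destination \(q\), which is fixed. Choose \(B_q=2\) and \(C_q=1\). For \(d\ne q\), choose \(B_d=0\) and \(C_d=2\). At \(z=q\), \(U_q(q)=2\) and every other arm gives \(-2\). At a nonfixed \(z\), \(q\) gives \(1\), \(z\) gives \(0\), and other arms give \(-2\), so \(q\) is the unique maximizer. The difficulty is the fixed points \(z\ne q\): there \(z\) gives \(0\) but \(q\) gives \(1\), which is wrong. To fix this, give fixed points their own benefit, \(B_z=1.5\) for fixed \(z\ne q\), and keep nonfixed arms at \(B_z=0\) with large costs, say \(C_d=10\) for all \(d\ne q\) (so no one deviates to a non-\(q\) arm). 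At a fixed \(z\ne q\), \(z\) gives \(1.5\) and \(q\) gives \(1\), so \(z\) is unique. At a nonfixed \(z\), \(z\) gives \(0\) and \(q\) gives \(1\), so \(q\) is unique. Verifying this case split is the only delicate step of the construction.
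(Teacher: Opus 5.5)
Your proof is correct and follows essentially the paper's argument. Both rest on the observation that every unassigned arm has the assignment-independent value $B_d-C_d$, so any deviation destination is strictly complied with, and your constructions (ties across the fixed points for stability; a common destination $q$ that only compliance at the other fixed points beats, for strict stability) match the paper's. The one difference is cosmetic: for the unique-maximizer necessity, the paper notes that all maximizers of $B_d-C_d$ tie at any deviating assignment, which avoids your pairwise case split through stability.
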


Write \(v_d=B_d-C_d\). Assignment to \(z\) raises only \(v_z\), by \(C_z>0\). A subject who deviates must choose a treatment maximizing the unchanged values \(v_d\). Each such treatment is a fixed point. Ties allow different assignments to lead to different fixed points, whereas unique choices require all deviations to go to one destination.

The utility in \eqref{eq:stable-model} has the encouragement form studied by \citet[Appendix~A, Lemma~A.1]{bht2025}. Baseline utility is \(B_{id}-C_{id}\), and assignment adds \(C_{id}\) only to the assigned arm. When each arm has a corresponding encouragement, they establish equivalence with strict stability under instrument independence, positive assignment probabilities, and an absolutely continuous distribution of baseline utilities. Their continuity assumption excludes baseline ties.

\subsubsection{Unordered monotone compliance}

Because unordered monotonicity restricts a collection of compliance functions (\cref{sec:unordered}), a choice-theoretic representation must restrict which compliance types can occur together. We represent these collections using common deviation costs and a restricted set of benefit profiles. Let \(C\) satisfy \(C_{zd}>C_{zz}=0\) for \(z\ne d\), and define
\[
T_d(B)=\{z:B_d>C_{zd}\}.
\]
Call \(B\) \emph{admissible} if the sets \(T_d(B)\), \(d\in\K\), partition \(\K\). Let \(\mathcal B\) be a collection of admissible profiles containing a profile \(B^I\) with
\[
\min_{z\ne d}C_{zd}\ge B_d^I>0\quad\text{for every }d.
\]
For each \(B\in\mathcal B\), choose \(R_B(z)\) to maximize \(B_d-C_{zd}\). Admissibility makes exactly one net payoff positive at each assignment, so this maximizer is unique. The profile \(B^I\) generates the identity.

\begin{theorem}[Unordered representation]\label{thm:unordered}
A collection of compliance functions is generated by some pair \((C,\mathcal B)\) as above if and only if it is unordered monotone and contains the identity.
\end{theorem}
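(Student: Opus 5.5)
The proof has two directions. For necessity, the first step is to show that each $R_B$ has basins $A_d(R_B)=T_d(B)$. If $z\in T_d(B)$, then $B_d-C_{zd}>0$. Admissibility says $z$ lies in no other $T_{d'}(B)$, so $B_{d'}-C_{zd'}\le0$ for $d'\ne d$, and the unique maximizer is $d$. The monotonicity of basins then comes from the fact that the cost matrix $C$ is common to all profiles. Each $T_d(B)=\{z:C_{zd}<B_d\}$ is a strict lower level set of the fixed function $z\mapsto C_{zd}$. As the scalar $B_d$ varies across profiles, these level sets are nested, so $\{A_d(R_B):B\in\mathcal B\}$ is a chain for each $d$. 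The identity is included because $B^I$ generates it. For $z\ne d$ we have $B^I_d\le C_{zd}$, so $z\notin T_d(B^I)$. Also $z\in T_z(B^I)$, since $C_{zz}=0<B^I_z$. Hence $R_{B^I}(z)=z$.

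For sufficiency, take an unordered monotone collection $\Rs$ that contains $I$. As noted in \cref{sec:unordered}, every nonempty basin $A_d(R)$ contains $d$, and for each $d$ the basins are totally ordered by inclusion. The plan is to build a common cost matrix that encodes, for each destination $d$, one linear order of assignments compatible with every basin in that chain.

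Fix $d$. The distinct basins $\{A_d(R):R\in\Rs\}\cup\{\emptyset\}$ form a finite chain $\emptyset\subsetneq S_1\subsetneq\cdots\subsetneq S_m$ with $d\in S_1$; since $I\in\Rs$, we have $S_1=\{d\}$. Define the rank $\rho_d(z)$ as the least $j$ with $z\in S_j$, and $\rho_d(z)=m+1$ if $z$ lies in no basin. Set $C_{zd}=\rho_d(z)$ for $z\ne d$ and $C_{dd}=0$; this gives $C_{zd}\ge2>0$ whenever $z\ne d$. Each basin is then a level set: $S_j=\{z:C_{zd}<j+\tfrac12\}$ for $j\ge1$. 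For $R\in\Rs$, define $B^R_d=j+\tfrac12$ if $A_d(R)=S_j$ is nonempty, and $B^R_d=0$ if $A_d(R)=\emptyset$. This gives $T_d(B^R)=A_d(R)$ in both cases, using $C_{zd}\ge0$ for the empty case.

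Because the basins of a single $R$ partition $\K$, each $B^R$ is admissible and generates $R$. The remaining requirement is that $\mathcal B$ contain a profile $B^I$ with $\min_{z\ne d}C_{zd}\ge B^I_d>0$. The profile constructed for $I$ has $B^I_d=1+\tfrac12$, and the bound holds because $\min_{z\ne d}C_{zd}\ge2$.

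The main obstacle is making a single common $C$ serve every profile at once. This works only because, for each destination separately, a chain of sets can be realized as the level sets of one function. That is exactly where unordered monotonicity enters: incomparable basins for the same $d$ could not both be lower level sets of $C_{\cdot d}$. I would also check that the collection generated is exactly $\Rs$, which holds by taking $\mathcal B=\{B^R:R\in\Rs\}$.
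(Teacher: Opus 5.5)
Your proof is correct and follows essentially the paper's route. Necessity goes through \(A_d(R_B)=T_d(B)\) and the nesting of threshold sets under a common \(C\), and sufficiency realizes each destination's basin chain as strict lower level sets of a single cost column. The only difference is cosmetic: you use the chain-rank function \(\rho_d\) with half-integer benefits \(j+\tfrac12\), whereas the paper orders the assignments by a permutation \(\sigma_d\) in which every basin is a prefix, with costs \(j-1\) and benefits \(\lvert A_d(R)\rvert\).
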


Common costs alone, however, do not ensure unordered monotonicity because an arbitrary benefit distribution need not produce it. Restricting \(\mathcal B\) as above makes the assignment basins nested threshold sets. This gives a representation related to the threshold and response-matrix characterizations in \citet[Theorem~T-3]{hp2018}.

These representations also place the restrictions of \cref{sec:models} inside the acyclic class. Since each of these models is a special case of \eqref{eq:choice-model}, \cref{thm:acyclic} implies that every compliance function in these classes is acyclic. Every compliance function in \Cref{tab:responses} is therefore acyclic, whereas \((2,3,1)\) is not. Since the acyclic class contains every collection \(\Ss\) described in \cref{lem:alignment}, \cref{thm:impossibility} applies to the unrestricted choice model as well.

\section{A LATE-targeting experimental design}\label{sec:design}

If compliance reflects rational choice, the experiment can use subjects' own incentives to reveal and restrict where they deviate.

In the econometric terminology, the key idea is to \emph{collect a pre-assignment covariate conditional on which one-sided noncompliance is restored}. This is related to, but distinct from, IV models in which instrument exogeneity holds conditional on observed covariates \citep{a2003,t2006,t2006c,f2007,orb2015}.
In these models, conditioning addresses confounding of the instrument. For us, exogeneity is already supplied by randomization, while conditioning acts as a mechanism to control the compliance behavior within the conditioned stratum.

We then discuss how the choice model developed in \cref{sec:choice} provides guidance on its implementation and the limitations of this approach.

\subsection{The two-stage experiment}\label{sec:protocol}

In situations where subjects deviate based on payoff considerations, the information on their preference helps elicit credible compliance behavior.
For this, we collect information on the most preferred arm of each subject.

Namely, we add a survey stage before randomization.
This can, for example, be done when obtaining subjects' consent to participate in the experiment.
Given the report, we randomize assignment possibly conditional on each subject's reported preferred arm.
In particular, the two-stage experiment goes as follows.

\begin{enumerate}
\item Inform each subject of the collection of treatment arms \(\mathcal{K}\) and the randomization scheme. Ask each subject to state a preferred arm \(S_i\in\K\).
\item Randomize assignment \(Z_i\) possibly conditional on the reported \(S_i\). For example, fix a baseline distribution \(a\in\Delta(\K)\) with \(a_z>0\) for every \(z\), independently of the report, and for \(0\le\lambda<1\), set
\begin{equation}\label{eq:lottery}
P(Z_i=z\mid S_i=s)
=(1-\lambda)a_z+\lambda\,\ind{z=s}.
\end{equation}
If \(a_z\equiv 1/K\), then the assignment probability to non-chosen arms is uniform.
\item Observe \((Y_i,D_i,Z_i,S_i)\) for each subject.
\end{enumerate}

The following two assumptions are essential in exploiting this new scheme.

\begin{assumption}[Conditional random assignment]\label{ass:conditional}
Conditional on \(S_i\), assignment is independent of potential outcomes, the compliance function, and remaining pre-assignment characteristics. The probabilities \(P(Z_i=z\mid S_i)\) are positive for every arm \(z\).
\end{assumption}

\begin{assumption}[Compliance to stated preference]\label{ass:preference}
Let \(R_i(z)\) continue to denote the treatment choice of individual \(i\) under the new scheme.
For every initial assignment \(z\),
\[
R_i(z)\in\{z,S_i\}.
\]
\end{assumption}

This assumption implies that the observed treatments satisfy \(D_i\in\{Z_i,S_i\}\). This implication is empirically testable, making the assumption partly falsifiable.
Restrictions like \cref{ass:preference} have analogues in observational studies.
\citet{klm2016} estimate the earnings payoffs to fields of study, using discontinuities at admission cutoffs as instruments for the field an applicant is offered, together with the strategy-proof application rankings that record each applicant's preferred field and next-best alternative.
The condition behind their Proposition~2(iii), also restated in \citet[Assumption~2, p.~829]{hhklm2026}, requires that an offer that does not induce a person to take the offered field leaves her choice among the other fields unchanged, and a person without such an offer takes the stated next-best alternative.
Together these restrictions say that receipt is either the offered field or the next-best field, which is \cref{ass:preference} with \(S_i\) equal to the next-best field.
There the elicited arm is a fallback rather than a favorite, and an offer plays the role of \(z\) rather than a randomized assignment.

\subsection{Identification for preference-anchored groups}\label{sec:possibility}

Conditional on \(S_i=d\), compliance to stated preference amounts to one-sided noncompliance, with \(d\) serving as the control arm. Subjects may differ arbitrarily in which other assignments they accept, but any deviation must lead to \(d\). This is the common destination needed for the comparison in identification.

We can now apply the one-sided comparison within a reported-preference group. For \(z\ne d\), the target is
\[
\tau_{d,z}
=\E[Y_{id}-Y_{iz}\mid S_i=d,R_i(z)=z].
\]
It compares the reported preferred arm \(d\) with \(z\) for subjects who report \(d\) but would comply with assignment to \(z\).

\begin{theorem}[Identification under compliance to stated preference]\label{thm:possibility}
Maintain \cref{ass:outcomes,ass:conditional,ass:preference}. If
\(P(S_i=d,R_i(z)=z)>0\), then
\begin{equation}\label{eq:wald-ratio}
\tau_{d,z}
=
\frac{\E[Y_i\mid S_i=d,Z_i=d]-\E[Y_i\mid S_i=d,Z_i=z]}
{P(D_i=z\mid S_i=d,Z_i=z)}.
\end{equation}
\end{theorem}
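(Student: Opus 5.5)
The plan is to reduce the problem, conditional on \(S_i=d\), to the binary one-sided noncompliance (Bloom) argument with \(d\) playing the role of the control. Assumption~\ref{ass:preference} then gives \(R_i(d)=d\) for every subject reporting \(d\), since \(R_i(d)\in\{d,S_i\}=\{d\}\). For \(z\ne d\) it gives \(R_i(z)\in\{z,d\}\). Writing \(C_i=\ind{R_i(z)=z}\) for the complier indicator within the stratum, consistency (Assumption~\ref{ass:outcomes}) yields, for a subject with \(S_i=d\) assigned to \(z\),
\[
Y_{iR_i(z)}=Y_{id}+C_i\,(Y_{iz}-Y_{id}),
\]
and, for a subject with \(S_i=d\) assigned to \(d\), \(Y_i=Y_{id}\).

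The steps are as follows. First, I use Assumption~\ref{ass:conditional}: given \(S_i=d\), \(Z_i\) is independent of \((\mathbf Y_i,R_i)\), and both cells \(Z_i=d\) and \(Z_i=z\) have positive probability. Hence
\[
\E[Y_i\mid S_i=d,Z_i=d]=\E[Y_{id}\mid S_i=d]
\]
and
\[
\E[Y_i\mid S_i=d,Z_i=z]=\E[Y_{id}\mid S_i=d]+\E[C_i(Y_{iz}-Y_{id})\mid S_i=d].
\]
Subtracting gives the numerator \(\E[C_i(Y_{id}-Y_{iz})\mid S_i=d]\). Second, for the denominator I use \(D_i=R_i(Z_i)\) and the same independence:
\[
P(D_i=z\mid S_i=d,Z_i=z)=P(R_i(z)=z\mid S_i=d)=P(C_i=1\mid S_i=d),
\]
which is positive because \(P(S_i=d,R_i(z)=z)>0\). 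Third, dividing gives
\[
\frac{\E[C_i(Y_{id}-Y_{iz})\mid S_i=d]}{P(C_i=1\mid S_i=d)}=\E[Y_{id}-Y_{iz}\mid S_i=d,\,C_i=1]=\tau_{d,z}.
\]

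The only delicate point is the conditioning on \(S_i\). The report is itself a pre-assignment choice, so independence must be taken conditional on it, exactly as Assumption~\ref{ass:conditional} states. The report must also be treated as fixed before \(Z_i\), so that \(R_i(\cdot)\), and with it \(C_i\), is a pre-assignment characteristic under the new scheme. I would state this explicitly and also note integrability, namely that \(\E[|Y_{id}|+|Y_{iz}|\mid S_i=d]<\infty\) is assumed as in the paper. Apart from that, the argument is the routine Bloom calculation.
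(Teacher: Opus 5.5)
Your proposal is correct and follows essentially the same route as the paper: conditional on \(S_i=d\), compliance to stated preference reduces the problem to one-sided noncompliance with \(d\) as the control. Conditional randomization then turns the difference in cell means into \(\E[\ind{R_i(z)=z}(Y_{id}-Y_{iz})\mid S_i=d]\), and dividing by the take-up rate \(P(R_i(z)=z\mid S_i=d)\) gives \(\tau_{d,z}\), with your remarks on \(D_i=z\iff R_i(z)=z\) within the stratum and on integrability simply making explicit what the paper leaves implicit.
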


Conditional on \(S_i=d\), subjects who decline \(z\) receive \(d\) under either assignment, canceling their contributions from the numerator. The formal proof is in \cref{app:design}.
\Cref{thm:possibility} is akin to \citet[Proposition~2(iii)]{klm2016}, which shows that, conditional on the next-best field, the IV estimand identifies the payoff to the preferred field relative to the next-best field for applicants induced into the preferred field by the offer.
Denoting the next-best field by \(d\) and the offered field by \(z\), their 2SLS coefficient reduces to the negative of the Wald ratio in \eqref{eq:wald-ratio}; it equals \(-\tau_{d,z}=\E[Y_{iz}-Y_{id}\mid S_i=d,\ R_i(z)=z]\). The two results therefore identify the same estimand, the effect for those who take the offered arm when it is assigned and otherwise take the elicited arm. The difference lies in how the elicited arm is used. It is the fallback in their design and the favorite in ours, so the group recovered there consists of applicants who obtain their preferred field, whereas the group recovered here consists of subjects who accept an alternative to their reported favorite.

We can also recover the two potential-outcome means separately:
\begin{align*}
\E[Y_{id}\ind{R_i(z)=z}\mid S_i=d]
&=\E[Y_i\mid S_i=d,Z_i=d]\\
&\hspace{50pt}-\E[Y_i\ind{D_i=d}\mid S_i=d,Z_i=z],\\
\E[Y_{iz}\ind{R_i(z)=z}\mid S_i=d]
&=\E[Y_i\ind{D_i=z}\mid S_i=d,Z_i=z].
\end{align*}
Divide each expression by \(P(R_i(z)=z\mid S_i=d)\) to obtain the two means. The receipt probabilities must be included because subtracting unweighted means from different received-treatment cells would compare different groups.

Once the identifying assumptions hold, estimation uses the sample counterpart of \eqref{eq:wald-ratio}. Subtract the two report-by-assignment sample means and divide by the fraction receiving \(z\) among subjects with \(S_i=d\) and \(Z_i=z\).
If assignment also depends on recorded baseline strata \(W_i\), standardize the cell means or use inverse-probability weighting with the known positive assignment probabilities. For example,
\[
\E\left[
\frac{\ind{Z_i=z}Y_i}{P(Z_i=z\mid S_i,W_i)}
\ \middle|\ S_i=d\right]
\]
recovers the mean outcome under assignment \(z\) in report group \(d\), under conditional randomization given the report and baseline strata. Weight the denominator in \eqref{eq:wald-ratio} in the same way. The resulting ratio keeps the intended group weights when assignment probabilities vary across strata.

The identification theorem itself does not require truthful reports, but a preference interpretation of \cref{ass:preference} does.
We now ask how \cref{ass:preference} can be implemented in the context of the choice model developed in \cref{sec:choice}.
The first question is how much of a subject's preference ranking can be elicited by varying assignment probabilities, to be addressed in the next subsection.
The second question is how much the elicited preference can credibly restrict the deviation directions, discussed in the following subsection.

\subsection{Truthful elicitation of preference}\label{sec:elicitation}

If we introduce the survey stage into the framework of \cref{sec:choice}, it is natural to model the subject as choosing her report to maximize expected utility from the announced lottery. Let
\begin{equation}\label{eq:anticipated}
V_i(z)=\max_{d\in\K}B_{id}-C_{izd},
\qquad C_{izd}>C_{izz}=0\qquad(d\ne z),
\end{equation}
be the value function of initial assignment to \(z\).
We assume that reports do not change perceived treatment benefits or primitive cost parameters. We first consider treatment-choice opportunities that, conditional on assignment, are independent of the report; reports then affect expected utility only through the announced assignment lottery.%
\footnote{The mediated reassignment procedure considered in \cref{sec:feasibility} additionally allows the report to determine the destination of an accommodated request.}

To consider eliciting a ranking, take benefits to be distinct and let \(\pi=(\pi_1,\ldots,\pi_K)\) list the arms in decreasing order of benefit. A feature of this ranking is \emph{strictly elicited} if its true value is the unique optimal report for every admissible benefit-cost profile \((B_{id},C_{izd})\).

The following proposition establishes two things.
First, in the choice model \eqref{eq:choice-model}, we can strictly elicit the favorite arm.
Second, whenever we cannot exclude arm-wise always-takers---as under \(\Ss\) in \cref{thm:impossibility} or even extended monotonicity---we \emph{cannot} strictly elicit the ranking below the favorite.

\begin{proposition}[Truthful preference elicitation]\label[proposition]{prop:truthful}
Suppose the subject maximizes her (expected) value function described in \eqref{eq:anticipated} under the conditional randomization scheme to be specified in each of the following statements.
\renewcommand{\theenumi}{\roman{enumi}}
\renewcommand{\labelenumi}{(\theenumi)}
\begin{enumerate}
\item\label{prop:truthful:i} In the choice model \eqref{eq:choice-model}, fix \(a\in\Delta(\K)\) with \(a_z>0\) for every \(z\), independently of the report, and let \(0<\lambda<1\). Under \eqref{eq:lottery}, the optimal reports are exactly the benefit-maximizing arms. In particular, when the favorite is unique, the assignment rule strictly elicits it.
\item\label{prop:truthful:ii} Let \(K\ge3\), and consider benefit-cost profiles with distinct benefits. Suppose we allow all benefit-cost profiles that induce arm-wise always-takers. Then, no conditional randomization scheme strictly elicits a feature of the ranking not determined by the favorite. Moreover, suppose subjects report a ranking \(\pi\) and are assigned by a lottery \(q(\pi)\in\Delta(\K)\) with \(q_z(\pi)>0\) for every \(z\). Truthful ranking reports are weakly optimal at every admissible profile if and only if
\[
q(\pi)=(1-\lambda)a+\lambda e_{\pi_1}
\]
for some \(a\in\Delta(\K)\) with \(a_z>0\) for every \(z\) and some \(0\le\lambda<1\).
\end{enumerate}
\end{proposition}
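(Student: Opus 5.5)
For part (i), I would compute the expected value of reporting \(s\) under \eqref{eq:lottery}. It is
\[
U(s)=(1-\lambda)\sum_z a_zV(z)+\lambda V(s).
\]
Only the second term depends on the report, so the optimal reports are exactly the maximizers of \(V\). It then remains to show that \(\argmax_z V(z)\) equals the set of benefit-maximizing arms. Let \(B^*=\max_d B_d\).
\begin{itemize}
\item Since \(C_{zd}>0\) for \(d\ne z\) and \(C_{zz}=0\), we have \(V(z)\le B^*\) for every \(z\). Equality holds if and only if the maximizing \(d\) equals \(z\) and \(B_z=B^*\): any deviation strictly loses cost, and staying yields \(B_z\).
\item Hence \(V(z)=B^*\) exactly when \(z\) is a favorite, and \(V(z)<B^*\) otherwise.
\end{itemize}
Because \(\lambda>0\), the optimal reports are precisely the favorites, and a unique favorite is strictly elicited.

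For the first claim of part (ii), I would use arm-wise always-taker profiles. Fix any conditional randomization scheme mapping reports to lotteries \(q(\cdot)\) with positive entries; reports can be arbitrary messages. Take two rankings \(\pi,\pi'\) with the same favorite \(f\) that differ in some feature not determined by the favorite. For both, construct profiles in which the subject always takes \(f\): set \(C_{zf}\) small enough that \(B_f-C_{zf}>B_d-C_{zd}\) for all \(d\), for instance \(C_{zf}=\varepsilon\) and the other costs large. Then \(V(z)=B_f-\varepsilon\) for \(z\ne f\) and \(V(f)=B_f\).

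Scaling costs identically in the two profiles makes \(V\) identical across them. Consequently the expected utility of every report is the same function for both profiles, and so is the set of optimal reports. No single report can then be uniquely optimal at \(\pi\) while revealing the feature that distinguishes \(\pi'\). Strict elicitation would require the true report, which carries the \(\pi\)-feature, to be the unique optimum at both profiles. That is a contradiction, unless the optimal set contains reports compatible with both, which strict uniqueness rules out.

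For the characterization in part (ii), I would argue sufficiency and necessity separately.

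\emph{Sufficiency.} With \(q(\pi)=(1-\lambda)a+\lambda e_{\pi_1}\), the argument of part (i) shows that reporting any ranking with the true favorite first is optimal. This includes the truthful ranking, since \(q\) depends only on \(\pi_1\).

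\emph{Necessity.} This is the main obstacle. I would use always-taker profiles whose value vectors are \(V=B_f\mathbf 1-\varepsilon(\mathbf 1-e_f)\); more generally, one can vary the costs so that \(V(z)=B_f-\varepsilon_z\) for arbitrary \(\varepsilon_z>0\), \(z\ne f\).

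First, suppose two rankings \(\pi,\pi'\) share the favorite \(f\). Weak optimality of truth at every profile gives \(q(\pi)\cdot V\ge q(\pi')\cdot V\) and the reverse for all such \(V\). The vectors \(V\) range over \(B_f\mathbf 1-\varepsilon\) with \(\varepsilon\) arbitrary positive off \(f\), which spans the space, so \(q(\pi)=q(\pi')\). Thus \(q\) depends only on the favorite; write it as \(q^f\).

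Next, compare favorites \(f\) and \(g\). Truthfulness requires \((q^f-q^g)\cdot\varepsilon\le0\) for all positive \(\varepsilon\) vanishing at \(f\), where the direction of the inequality comes from \(V=B_f\mathbf 1-\varepsilon\) and the common constant cancels. Therefore \(q^f_z\ge q^g_z\) for every \(z\ne f\). By symmetry, \(q^g_z\ge q^f_z\) for every \(z\ne g\). Hence \(q^f_z=q^g_z\) for all \(z\notin\{f,g\}\).

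Since \(K\ge3\), each off-favorite coordinate \(q^f_z\) is then common across all \(f\ne z\); call it \(b_z\). Summing to one gives \(q^f_f=1-\sum_{z\ne f}b_z\), so \(q^f=b+\mu e_f\) with a common \(\mu=1-\sum_z b_z\). The constraint \(q^g_f\le q^f_f\) from the previous step gives \(\mu\ge0\). Set \(\lambda=\mu\) and \(a=b/(1-\lambda)\); positivity of the \(q_z\) gives \(a_z>0\) and \(\lambda<1\).

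I need to double-check that these always-taker profiles are admissible, with distinct benefits and positive costs, and that the genericity of the \(\varepsilon\) vectors is legitimate.
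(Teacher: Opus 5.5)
Your proposal is correct and follows the paper's proof essentially step for step:
\begin{itemize}
\item part~(i) via \(U(s)=(1-\lambda)\sum_z a_zV(z)+\lambda V(s)\), with \(V(z)=\max_d B_d\) exactly at the favorites;
\item always-taker profiles that realize the same \(V\) for every ranking with a given favorite;
\item openness of the set of \(V\) uniquely maximized at \(f\), which gives \(q(\pi)=q(\pi')\);
\item the pairwise comparison of favorites.
\end{itemize}
Collecting the common off-favorite coordinates \(b_z\) is only a repackaging of the paper's \(q^t-q^{t'}=\lambda_{tt'}(e_t-e_{t'})\) combined with its three-favorite identity.

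Your admissibility worry can be settled. Take \(C_{zf}=\varepsilon_z\) and distinct \(B_d<\min_z(B_f-\varepsilon_z)\) for \(d\ne f\), ordered by \(\pi\). Then any positive remaining costs make \(f\) the unique choice at every assignment. As \(B_f\) and the \(\varepsilon_z>0\) vary, the realized \(V\) sweep out exactly the open set of vectors uniquely maximized at \(f\) that the paper uses.

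There is one sign slip to fix. From \((q^f-q^g)\cdot\varepsilon\le0\) you get \(q^f_z\le q^g_z\) for \(z\ne f\), not \(\ge\). Symmetrically, \(q^g_z\le q^f_z\) for \(z\ne g\). The equality off \(\{f,g\}\) is unaffected by this. However, it is the corrected inequality at \(z=f\) that yields \(q^g_f\le q^f_f\) and hence \(\mu\ge0\). As written, your inequalities would give \(\mu\le0\).
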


The proof is in \cref{app:design}. The second statement of part~(\ref{prop:truthful:ii}) says that the lotteries under which truthful ranking reports are weakly optimal are exactly those of the form \eqref{eq:lottery}, with \(S_i\) equal to the reported first choice and a baseline and boost that do not depend on the report. The order below the favorite does not affect assignment.
The favorite is therefore the most detailed ranking information that these lotteries can elicit credibly and uniformly.
For this reason, we hereafter elicit only the favorite.

We can compare this result with \citet{klm2016}, who build on a similar identification strategy and use the entire ranking.
Their ranking is credible below the top because the admission mechanism is a serial dictatorship, offers are essentially final, and cutoffs are uncertain at application, so every listed course is consequential and the value of an offer is the value of the field \citep[pp.~1072--1074]{klm2016}.
In our setting, subjects can switch after assignment, which is why \cref{prop:truthful} elicits a favorite but no fallback.
\citet{k2023} studies related report-dependent lotteries for binary treatment assignment.

The limit on what an assignment lottery can elicit has counterparts in social choice and in experimental incentives.
A strategy-proof rule from reported rankings to lotteries is a probability mixture of schemes each of which is unilateral or duple \citep[Theorem, p.~671]{gibbard1977}.
Strategy-proofness alone therefore permits lotteries that respond to the entire ranking, such as point voting with decreasing weights, and tops-onlyness for random rules is obtained from strategy-proofness together with unanimity on suitably rich domains \citep[Theorem~1, p.~419]{cz2018}, a combination that for a single agent means assigning the reported favorite with certainty.
The characterization above is instead a random dictatorship mixed with a constant lottery, obtained without unanimity or efficiency; the constant part is deliberate, because every arm must remain reachable.
In the vocabulary of \citet[Theorem~1, p.~1485]{ach2018}, the announced lottery is a random set selection mechanism, and for one menu of \(K\) arms their strictly incentive compatible mechanisms take the same baseline-plus-increment form.
What is specific to our setting is why nothing beyond the favorite can be elicited.
Subjects with the same favorite can rank the other arms differently while obtaining the same expected payoff. Consequently, no assignment lottery can strictly elicit the lower ranking for all admissible benefit and cost profiles. Report-dependent transfers that enter utility additively do not change this conclusion.
This differs from constrained elicitation, where lottery preferences are the ranking itself, the entire ranking is elicitable, and only coarse summaries such as the second-ranked alternative alone are not \citep[Section~5, p.~526]{ach2021}.

\subsection{Feasibility of compliance to stated preference}\label{sec:feasibility}

Now given the truthful revelation of the favorite, how do we know that deviation only takes place toward it?

First, consider the case where compliance is reasonably postulated to be strictly stable.
\citet[Section~7]{bht2025} report that strict stability is not rejected in the French counseling experiment of \citet{bcg2014}.
\citet{hhklm2026} assume stability in the context of the admission mechanism and field-specific offers, as discussed in \cref{sec:stable}.
More generally, stability is plausible when assignment increases the attractiveness of its targeted treatment while leaving the relative appeal of other treatments unchanged. With a unique preferred treatment in the absence of encouragement, an individual either takes the assigned treatment or returns to that same baseline choice. \citet{fgmp2016} provide a substantive argument of this kind. Because a Head Start offer leaves the availability of other childcare options unchanged, it should not change the choice among those options. Related irrelevance restrictions appear in the admissions settings studied by \citet{drs2023} and \citet{altmejd2026}, where an offer that does not induce participation in the offered field is assumed not to change the choice among other fields. \citet{hmr2025} provide related motivation for encouragement effects confined to their targeted interventions, arguing that classroom and on-the-job training serve distinct purposes and documenting an absence of cross-effects from program-specific instruments. These arguments motivate (strict) stability in applications where encouragement lowers the bar to take the assigned treatment and preserves an individual's baseline ranking of the remaining alternatives. In our RCT setup, this corresponds to deviation costs that do not depend on assignment.

In the destination-dependent cost model, any deviation leads to an arm maximizing the baseline net value \(B_{id}-C_{id}\), whereas the lottery elicits an arm maximizing \(B_{id}\). For a subject who deviates, compliance to stated preference therefore requires the reported favorite \(S_i\) to maximize baseline net value as well; this is more plausible when the variation of deviation costs is not too large.
If the baseline net-value maximizer is unique and equals \(S_i\), every deviation leads to \(S_i\). If several arms maximize baseline net value and \(S_i\) is among them, the tie-breaking rule for deviations must favor \(S_i\).
In the choice model, a sufficient but not necessary configuration is when the deviation cost is uniform across arms. In \eqref{eq:stable-model}, impose further that
\[
R_i(z)\in\argmax_{d\in\mathcal{K}}B_{id}-C_i\ind{d\ne z},\qquad C_i>0.
\]
Then, with \(S_i\in\argmax_d B_{id}\) uniquely, we have \(R_i(z)\in\{z,S_i\}\) for every \(z\).

Next, consider the case where we want to accommodate the possibility that the deviation costs might depend on assignment.
When deviation costs depend on assignment, the feasibility of compliance to stated preference depends on the administrator's ability to mediate treatment switching. If switching requires authorization and the reported arm is available, then after some negotiation attempts by the subject, the administrator can \emph{reassign} her to that arm.
Indeed, the choice model \eqref{eq:choice-model} guarantees that there exists a cost of negotiation \(c_{iz}\) such that reassignment is strictly preferable for a subject who would otherwise deviate.
Namely, take \(0<c_{iz}<\min_{d\ne z}C_{izd}\).
Suppose the subject would choose \(d\ne z\) in the original choice problem. Since \(S_i\) is her truthfully reported favorite,
\[
B_{iS_i}-c_{iz}>B_{id}-C_{izd}\ge B_{iz}.
\]
She therefore strictly prefers reassignment over any other arm.
Thus, reassignment controls deviation even when the original compliance function is only acyclic.
This \(c_{iz}\) may come from the burden to negotiate or also be embedded more explicitly as an opportunity cost, such as the monetary incentive provided only upon her compliance.

Truthful reporting also remains strictly optimal even when subjects know this procedure, provided \(\lambda>0\) and the negotiation cost \(c_{iz}>0\) is independent of the report for every initial assignment \(z\). Under report \(s\), the continuation value is
\[
\tilde{V}_i(z;s)=\max\{B_{iz},B_{is}-c_{iz}\}.
\]
If \(t\) is the unique favorite, then \(\tilde{V}_i(z;t)\ge \tilde{V}_i(z;s)\) for every \(z\), while \(\tilde{V}_i(t;t)=B_{it}\) and \(\tilde{V}_i(s;s)\le V_i(s)\). Thus expected utility \(U_i(s)\) from reporting \(s\) under \eqref{eq:lottery} satisfies
\[
U_i(t)-U_i(s)\ge\lambda(B_{it}-V_i(s))>0\qquad(s\ne t),
\]
so reporting the favorite remains strictly optimal.

Such administrative control over program entry appears in applications such as the French counseling experiment, whose protocol redirected refusers to standard counseling and denied access to the other intensive program \citep[p.~148]{bcg2014}.
Even when all switching goes through the administrator, redirecting every refuser to one fixed arm requires staff to enforce the protocol. In that experiment, contractors pressed staff to send them unassigned job seekers \citep[p.~155]{bcg2014}. The reported favorite asks staff only to grant what the subject wants, and she has no reason to leave it, so compliance to stated preference is easier to enforce. Report-specific destinations also identify comparisons between active arms, whereas a fixed destination identifies each arm only against that destination.

Meanwhile, such control is harder to exercise when equivalent treatments are available outside the supervision of the experimental administrator. The controls in the JTPA experiment frequently obtained comparable training through other providers or funding sources \citep[pp.~652, 661--662]{hhsk2000}, and Head Start controls could enroll in other preschool programs \citep[p.~1254]{fgmp2016}. In these settings, controlling reassignment within the experiment does not ensure \cref{ass:preference}. Compliance to stated preference must therefore rest on an additional behavioral restriction, such as strict stability with its common deviation destination equal to the reported favorite. The feasibility of such restrictions can then be tested by the aforementioned testable implication, \(D_i\in\{Z_i,S_i\}\).

All of the above arguments presume that the reported arm remains a favorite between the survey stage and the treatment stage. The required dynamic consistency of preferences is plausible when the survey and treatment occur close together in time.

\section{Conclusion}\label{sec:conclusion}

Multi-arm experiments allow researchers to compare several policies in one study, but imperfect compliance can make comparisons between received treatments refer to incompatible groups. With three or more arms, even the intersection of ordered monotonicity, strict stability, and unordered monotonicity is not strong enough to isolate any group between any two treatment arms. If outcomes are unrestricted, these compliance restrictions do not identify nontrivial margin-specific group ATEs for any groups, including convex combinations of LATEs for the same treatment margin.

When compliance decisions come from economic choice behavior, our choice model points to a design that permits identification of margin-specific effects. While rational choice with positive deviation costs gives rise to a broad class of acyclic compliance, it also hints at the incentive structure that the experimenter can exploit. A lottery tilted toward the reported arm elicits a favorite.
And when treatment receipt is confined to the assigned arm or the reported favorite, conditioning on the report restores one-sided noncompliance. Then, an appropriate Wald ratio identifies the favorite's effect relative to each alternative for subjects who would comply with the alternative.

The feasibility of compliance to stated preference depends on the administrator's control over treatment switching. When switching requires authorization, reassignment to a truthfully reported favorite removes the incentive to deviate further.
This scenario can accommodate acyclic compliance in general.
When subjects can obtain equivalent treatments elsewhere, the administrator may not be able to practically enforce compliance to stated preference. The restriction must instead rest on a behavioral assumption, such as strict stability with the reported favorite as the common deviation destination. Its observable implication can be tested after the experiment.

Broadly speaking, this paper sheds light on the role of compliance assumptions in experimental design. Preference elicitation, random assignment, and the handling of deviations can be designed together to support identification of treatment effects.

\begin{appendices}
\appendix
\appendixpage

All proofs in this paper are verified in Lean.%
\footnote{The code is available at \url{https://github.com/kajitetsuya/lean-late-design}.}

\section{Group alignment and nonidentification}\label[appendix]{app:impossibility}

\begin{proof}[Proof of \cref{lem:alignment}]
First characterize the nonidentity members of the class. Strict stability gives one common destination \(q\) for all nonfixed assignments. The destination must be fixed, for otherwise its own deviation would also have to equal \(q\). Therefore the map leaves every point outside the basin \(A_q(R)\) fixed and sends every point in it to \(q\). Since \(R\) is ordered monotone, \(A_q(R)\) is an interval containing \(q\). Conversely, every interval collapse in \eqref{eq:collapse} is strictly stable and ordered monotone.

For each \(q\), let \(\mathcal C_q\) consist of the intervals \(A\) for which \(M_{q,A}\in\Ss\); note that \(M_{q,\{q\}}=I\). A collapse with destination \(q\) has \(q\)-basin \(A\); every other treatment \(d\) has basin either \(\{d\}\) or \(\varnothing\). Thus unordered monotonicity is equivalent, in this class, to requiring that each \(\mathcal C_q\) be a chain. There are no further compatibility constraints between different destinations.

Maximality of \(\Ss\) now implies that each \(\mathcal C_q\) is a maximal chain among the intervals containing \(q\). In particular, \(\{q\}\) and \(\K\) belong to \(\mathcal C_q\), since each is comparable with every interval containing \(q\). If successive intervals differed by at least two assignments, an intermediate interval could be added by extending one endpoint, contradicting maximality. Hence
\begin{equation}\label{eq:saturated}
A_q^1=\{q\}\subsetneq A_q^2\subsetneq\cdots
\subsetneq A_q^K=\K,\qquad \lvert A_q^m\rvert=m.
\end{equation}

Fix arbitrary \(t\ne s\) and coefficient vectors satisfying \eqref{eq:alignment}. The identity gives \(b_t=c_s\eqqcolon\alpha\). For every \(A\in\mathcal C_t\), the collapse with destination \(t\) yields
\[
\sum_{z\in A}b_z=\alpha\ind{s\notin A}.
\]
Take successive differences along \eqref{eq:saturated}. The new coefficient is zero except at the step adding \(s\), where it equals \(-\alpha\). This proves \(b=\alpha(e_t-e_s)\). Similarly, the \(s\)-chain gives \(\alpha\ind{t\notin A}=\sum_{z\in A}c_z\), and successive differences give \(c=\alpha(e_s-e_t)\).

Choose \(q\notin\{t,s\}\). Its chain begins at a set containing neither \(t\) nor \(s\) and ends at a set containing both. At the first step including either, exactly one has been added; let \(A\) be this interval. If \(A\) contains \(t\) but not \(s\), the \(t\)-basin of \(M_{q,A}\) is empty and its \(s\)-basin is \(\{s\}\), so \eqref{eq:alignment} reads \(0=c_s=\alpha\). If \(A\) contains \(s\) but not \(t\), it reads \(b_t=0\), so again \(\alpha=0\). Thus \(b=c=0\).
\end{proof}

\begin{proof}[Proof of \cref{thm:impossibility}]
Fix \(t\ne s\). The proof has five steps. Step~1 restates \cref{lem:alignment} as a statement about row spaces. Step~2 uses it to show that a linear functional of outcomes is identified only if it vanishes. Steps~3--5 apply this claim to the three kinds of target in the theorem: groups given as distributions over subjects, groups defined relative to the population, and outcome-defined subpopulations.

\textit{Step 1 (row spaces).} We work with a finite collection of latent cells \(j\in\mathcal J\). Each cell has a compliance function \(r_j\). Several cells may share a compliance function, but every compliance function in \(\Rs\) must belong to at least one cell. Let \(J_i\) denote the cell of subject \(i\), and write \(\mathcal I_j=\{i\in\mathcal I:J_i=j\}\). Define
\[
[\Gamma_d]_{zj}=\ind{r_j(z)=d},\qquad
\Gamma_d\in\mathbb R^{K\times\lvert\mathcal J\rvert},
\]
so that the \(j\)th column of \(\Gamma_d\) is the indicator of the basin \(A_d(r_j)\). The \(j\)th coordinate of \(\Gamma_t^\top b\) is \(\sum_{z:r_j(z)=t}b_z\). Hence the equality \(\Gamma_t^\top b=\Gamma_s^\top c\) is \eqref{eq:alignment} for every \(R\in\Rs\), and in particular for every \(R\in\Ss\). By \cref{lem:alignment}, \(b=c=0\). Therefore
\begin{equation}\label{eq:zero-intersection}
\row(\Gamma_t)\cap\row(\Gamma_s)=\{0\}.
\end{equation}

\textit{Step 2 (an outcome perturbation).}
\begin{claim}
Fix positive cell probabilities \(p_j=P(J_i=j)\), and let \(\phi_j\), \(j\in\mathcal J\), be measurable real-valued functions with \(\phi_j(y)=0\) whenever \(y_t=y_s\). Consider the linear functional
\[
L(P)=\sum_{j\in\mathcal J}p_j\E_P[\phi_j(\mathbf Y_i)\mid J_i=j].
\]
If \(L\) is identified over the populations that share these cell probabilities and differ only in the conditional outcome laws given the cell, then every \(\phi_j\) is identically zero.
\end{claim}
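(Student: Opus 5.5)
The plan is to exploit the fact that, with cell probabilities and pre-assignment characteristics held fixed, the observable law of \((Y,D,Z)\) depends on the conditional outcome laws only through the per-cell marginals of each \(Y_{id}\), aggregated by the matrices \(\Gamma_d\). Under \cref{ass:outcomes,ass:random}, for each \(z\) and \(d\) the sub-distribution of \(Y_i\) on \(\{D_i=d,Z_i=z\}\), normalized by \(P(Z_i=z)\), is \(\sum_j p_j[\Gamma_d]_{zj}\,\mathcal L_j(Y_{id})\). Here \(\mathcal L_j\) denotes the law of \(\mathbf Y_i\) given \(J_i=j\). The identification requirement therefore says: any perturbation of the \(\mathcal L_j\) that preserves these aggregated marginals must leave \(L\) unchanged. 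I will produce perturbations that violate this unless every \(\phi_j\equiv0\). Throughout, I will perturb around a baseline in which each \(\mathcal L_j\) is a fixed mixture with enough mass everywhere, for instance containing each relevant point mass with positive weight. Then signed perturbations of size \(\varepsilon\) keep every \(\mathcal L_j\) a probability measure and keep the moment conditions.

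\emph{Step A (coupling perturbations).} Changing the joint law within a single cell while keeping all its one-dimensional marginals \(\mathcal L_j(Y_{id})\), \(d\in\K\), fixed leaves the observables unchanged. Hence \(\E[\phi_j(\mathbf Y_i)\mid J_i=j]\) must be invariant under all such marginal-preserving perturbations of \(\mathcal L_j\). I will use finitely supported four-point rearrangements: swap the mass at \((y_t,y_s,\cdot)\) and \((y'_t,y'_s,\cdot)\) against \((y_t,y'_s,\cdot)\) and \((y'_t,y_s,\cdot)\), and likewise in the other coordinates. These force \(\phi_j\) to be additively separable across coordinates, \(\phi_j(y)=\sum_d f_{jd}(y_d)\). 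The constraint \(\phi_j=0\) whenever \(y_t=y_s\), with the other coordinates arbitrary, then forces \(f_{jd}\) to be constant for \(d\notin\{t,s\}\). It also forces \(f_{jt}(u)+f_{js}(u)\) to be constant in \(u\). So, after absorbing constants, \(\phi_j(y)=h_j(y_t)-h_j(y_s)\) for a single measurable \(h_j\).

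\emph{Step B (marginal perturbations and the row spaces).} Fix a finite signed measure \(\nu\) with total mass zero and perturbation vectors \(x,x'\in\mathbb R^{\lvert\mathcal J\rvert}\). Perturb the \(t\)-marginal of cell \(j\) by \(\varepsilon x_j\nu/p_j\) and its \(s\)-marginal by \(\varepsilon x'_j\nu/p_j\). Extend each perturbation to the joint law by any product-type coupling, which is harmless by Step A. The observables change by \(\varepsilon(\Gamma_t x)\nu\) in the \(D=t\) cells and by \(\varepsilon(\Gamma_s x')\nu\) in the \(D=s\) cells, with nothing else changing. The functional changes by \(\varepsilon\,\kappa\,(w^\top x-w^\top x')\), where \(w_j=\int h_j\,d\nu\) and \(\kappa\) is a normalization. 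Identification means that \(w^\top x=w^\top x'\) whenever \(\Gamma_t x=0\) and \(\Gamma_s x'=0\). Taking \(x'=0\) gives \(w\perp\ker\Gamma_t\), so \(w\in\row(\Gamma_t)\). Symmetrically, taking \(x=0\) gives \(w\in\row(\Gamma_s)\). By \eqref{eq:zero-intersection}, \(w=0\), that is, \(\int h_j\,d\nu=0\) for every mean-zero \(\nu\). Hence each \(h_j\) is constant, and so \(\phi_j\equiv0\).

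The main obstacle is Step A for general measurable \(\phi_j\): the four-point swaps only yield separability pointwise. I would first handle this with point-mass baselines, which requires the baseline \(\mathcal L_j\) to charge the relevant points. If that is too delicate, I would first show that \(\E[\phi_j\mid J_i=j]\) is determined by the marginals for all laws in a neighborhood. I would then invoke the standard fact that a function whose integral depends only on the marginals of the integrating measure is a sum of functions of single coordinates, off a null set. Moment existence is preserved because all perturbations are finitely supported.
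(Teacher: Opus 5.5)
Your argument is correct, but it takes a longer route than the paper, which never needs your Step~A. The paper fixes \(y\) with \(y_t\ne y_s\) and lets \(y^{(t)}\) be \(y\) with its \(t\)th coordinate replaced by \(y_s\). It puts the two-point law \(\tfrac12\delta_y+\tfrac12\delta_{y^{(t)}}\) in every cell and tilts the weights by \(\pm\epsilon v_j/p_j\) with \(v\in\ker\Gamma_t\). Since the two support points differ only in coordinate \(t\), only the \(D=t\) observables can move, and they move by \(\pm\epsilon(\Gamma_tv)_z=0\). Since \(y^{(t)}\) lies on the diagonal where every \(\phi_j\) vanishes, \(L\) moves by exactly \(\epsilon\sum_j v_j\phi_j(y)\). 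Thus \(g=(\phi_j(y))_j\in\row(\Gamma_t)\), the mirror construction with \(y^{(s)}\) puts the same \(g\) in \(\row(\Gamma_s)\), and \eqref{eq:zero-intersection} gives \(g=0\).

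So the paper uses the diagonal-vanishing hypothesis up front: it kills one support point and reads off \(\phi_j(y)\) directly, with an explicit joint law. You use it only after proving that identification forces each cell expectation to depend on \(\mathcal L_j\) through its one-dimensional marginals alone, so that \(\phi_j(y)=h_j(y_t)-h_j(y_s)\). You then run the row-space step on marginal perturbations. Your route is heavier but yields more: separability plus the marginal-perturbation step is essentially a general row-space characterization of identified linear outcome functionals, not tied to the diagonal hypothesis.

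One caution: the obstacle you flag in Step~A is not real. Identification is required over all conditional outcome laws, so each four-point rectangle (and each pair \(u,u'\) in Step~B) can get its own point-mass baseline, which gives separability at every point. Do not fall back on the almost-everywhere version, which would not by itself deliver \(\phi_j\equiv0\).
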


To prove the claim, fix an outcome vector \(y\) with \(y_t\ne y_s\), put \(g_j=\phi_j(y)\), and let \(y^{(t)}\) be \(y\) with its \(t\)th coordinate replaced by \(y_s\). In every cell, let \(\mathbf Y_i\) equal \(y\) or \(y^{(t)}\) with probability \(1/2\) each. For any \(v\in\ker\Gamma_t\), change these probabilities in cell \(j\) to
\[
\frac12+\epsilon\frac{v_j}{p_j}
\qquad\text{and}\qquad
\frac12-\epsilon\frac{v_j}{p_j},
\]
respectively, with \(\epsilon\ne0\) small enough to keep them positive. Both populations have two-point outcome distributions, so all moments are finite, and both use the same independent assignment rule. Conditional on assignment \(z\), the probabilities of \((Y,D)=(y_t,t)\) and \((Y,D)=(y_s,t)\) change by \(\epsilon(\Gamma_tv)_z=0\) and its negative, and all other observable probabilities are unchanged. Thus the observable law is unchanged, while \(L\) changes by \(\epsilon g^\top v\). Identification therefore requires \(g\perp\ker\Gamma_t\), that is, \(g\in\row(\Gamma_t)\). Repeating the argument with the \(s\)th coordinate of \(y\) replaced by \(y_t\) gives \(g\in\row(\Gamma_s)\), and \eqref{eq:zero-intersection} gives \(g=0\). Since \(y\) was arbitrary and each \(\phi_j\) vanishes when \(y_t=y_s\), the claim follows.

\textit{Step 3 (groups given as distributions over subjects).} Let \(G_1,\ldots,G_m\) be groups specified without potential outcomes, and let \(a_1,\ldots,a_m\) be real coefficients with \(\nu=\sum_{\ell=1}^m a_\ell G_\ell\ne0\); if \(\nu=0\), the contrast is identically zero. Choose the population distribution as
\[
\mu=\frac{1}{2}\frac{\lvert\nu\rvert}{\lvert\nu\rvert(\mathcal I)}+\frac{1}{2}H,
\]
where \(\lvert\nu\rvert\) is the total variation measure of \(\nu\) and \(H\) is a probability measure on \(\mathcal I\) that places positive mass on one subject index of every allowed compliance type. Then \(\nu\) is absolutely continuous with respect to \(\mu\), with bounded density \(f=d\nu/d\mu\). Form cells by partitioning subjects according to compliance type and to whether \(f\) is positive, negative, or zero, and omit cells with zero \(\mu\)-probability. A compliance type may thus occupy up to three cells, which Step~1 allows. Put \(p_j=\mu(\mathcal I_j)>0\) and \(h_j=\nu(\mathcal I_j)/p_j\). Every compliance type is represented. Moreover, \(h\ne0\). Since \(f\) has a constant sign on each cell, \(h=0\) would force \(f=0\) \(\mu\)-almost everywhere, and hence \(\nu=0\).

Restrict conditional outcome laws to depend only on the cell, and set outcomes to zero on \(\mu\)-null indices. Then
\[
\sum_{\ell=1}^m a_\ell\tau_{G_\ell}(t,s)
=\sum_j p_jh_j\E[Y_{it}-Y_{is}\mid J_i=j].
\]
This is \(L\) with \(\phi_j(y)=h_j(y_t-y_s)\). These functions do not all vanish, so the claim gives nonidentification. The construction holds \(\mu\), all pre-assignment characteristics, and all \(G_\ell\) fixed. It proves \cref{rem:signed-groups}, and taking \(m=1\), \(a_1=1\) gives the result for a single group ATE.

\textit{Step 4 (groups defined relative to the population).} Groups formed by conditioning on pre-assignment characteristics, or by normalizing fixed finite nonnegative functions of those characteristics, depend on \(\mu\). We therefore choose \(\mu\) first and then form the groups. Take \(\mu\) finitely supported, with positive probability on every compliance type and with the required normalizing constants finite and positive. Each resulting \(G_\ell\) is absolutely continuous with respect to \(\mu\), and so is \(\nu=\sum_{\ell}a_\ell G_\ell\). If \(\nu\ne0\), the cells and the outcome construction of Step~3 apply without change. Because the groups do not depend on outcomes, \(\mu\) and every \(G_\ell\) stay fixed while outcomes change.

\textit{Step 5 (outcome-defined subpopulations).} Now consider a fixed measurable membership event \(A\) in \((X,\mathbf Y)\). The criterion \(A\) stays fixed below, although which subjects satisfy it changes with potential outcomes. For a population law \(P\) with \(P(A)>0\), write
\[
N(P)=\E_P[\ind{A}(Y_t-Y_s)],\qquad
\tau_A(P)=\frac{N(P)}{P(A)}.
\]
Suppose \(\tau_A\) is identified and nontrivial. Take any two observationally equivalent laws \(P,P'\) and any \(Q\) with \(Q(A)>0\). For \(0<\epsilon<1\), the mixtures \(P_\epsilon=(1-\epsilon)Q+\epsilon P\) and \(P'_\epsilon=(1-\epsilon)Q+\epsilon P'\) are admissible and observationally equivalent, and both give \(A\) positive probability, so \(\tau_A(P_\epsilon)=\tau_A(P'_\epsilon)\). Because \(N\) and the probability of \(A\) are linear in the law, and \(N(Q)=\tau_A(Q)Q(A)\),
\[
\tau_A(P_\epsilon)-\tau_A(Q)
=\frac{N(P_\epsilon)-\tau_A(Q)P_\epsilon(A)}{P_\epsilon(A)}
=\frac{\epsilon\,[N(P)-\tau_A(Q)P(A)]}{P_\epsilon(A)},
\]
and likewise for \(P'_\epsilon\). Equate the two expressions, divide by \(\epsilon\), and let \(\epsilon\downarrow0\), so that both denominators tend to \(Q(A)\). This gives
\[
N(P)-N(P')=\tau_A(Q)\,[P(A)-P'(A)].
\]
Since \(\tau_A\) is nontrivial, two choices of \(Q\) give different values of \(\tau_A(Q)\), so both differences are zero. Thus \(N\) and \(P(A)\) are separately identified, including at laws under which \(A\) has zero probability.

To apply the claim to \(N\), consider populations in which \(X\) takes one allowed value \(x_j\) for each compliance function in \(\Rs\), with \(r_j\) the compliance function recorded in \(x_j\), and fixed positive probabilities \(p_j\); cell \(j\) consists of the subjects with \(X=x_j\). Then \(N=L\) with \(\phi_j(y)=\ind{(x_j,y)\in A}(y_t-y_s)\), so the claim makes every \(\phi_j\) vanish. Because the representatives \(x_j\) can be chosen arbitrarily, \(\ind{A}(Y_t-Y_s)\) vanishes at every allowed value of \((X,\mathbf Y)\). Hence \(\tau_A(P)=0\) whenever it is defined, contradicting nontriviality.
\end{proof}

\section{Choice-theoretic representations}\label[appendix]{app:representations}

\begin{proof}[Proof of \cref{thm:acyclic}]
If \(R(z)\ne z\) maximizes \(B_d-C_{zd}\), optimality gives \(B_{R(z)}-C_{z,R(z)}\ge B_z\). Since \(C_{z,R(z)}>0\), we have \(B_{R(z)}>B_z\). A nontrivial cycle would produce a strict chain of benefit inequalities returning to its starting value, which is impossible.

Conversely, suppose \(R\) has no nontrivial cycle. Since \(\K\) is finite, every orbit ends at a fixed point. Let \(h(z)\) be the number of moves needed to reach a fixed point, with \(h(z)=0\) at a fixed point, and set \(B_z=-h(z)\). At a nonfixed point, \(h(R(z))=h(z)-1\), so \(B_{R(z)}=B_z+1\). Define
\[
C_{zd}=
\begin{cases}
0,&d=z,\\
1/2,&d=R(z)\ne z,\\
K+1,&\text{otherwise}.
\end{cases}
\]
Every off-diagonal cost is positive. At a nonfixed assignment, the prescribed destination gives \(B_z+1/2\), strictly more than compliance. All other unassigned destinations give at most \(-K-1\), whereas \(B_z\ge-(K-1)\). At a fixed assignment, compliance gives zero and every other destination has negative net benefit. The maximizer is therefore uniquely \(R(z)\) for every \(z\).
\end{proof}

\begin{proof}[Proof of \cref{thm:ordered}]
For necessity, let \(R\) select a maximizer at each assignment, with ties within a row broken arbitrarily, and write \(Q_{zd}=B_d-C_{zd}\). Strict submodularity of \(C\) means that \(Q\) has strictly increasing differences:
\[
Q_{za}+Q_{z'b}>Q_{zb}+Q_{z'a}
\qquad(z<z',\ a<b).
\]
If \(R\) decreased, say \(R(z)=b>a=R(z')\), optimality would give \(Q_{zb}\ge Q_{za}\) and \(Q_{z'a}\ge Q_{z'b}\), contradicting the displayed strict inequality.

For sufficiency, let \(R\) be ordered monotone and choose \(L>K\). Define
\[
Q_{zd}=-L(d-R(z))^2+zd.
\]
If \(r=R(z)\) and \(d\ne r\), then
\[
Q_{zr}-Q_{zd}
=L(d-r)^2-z(d-r)>0.
\]
For \(d>r\), the difference is \((d-r)[L(d-r)-z]>0\); for \(d<r\), both terms in the preceding expression are positive. Hence \(R(z)\) uniquely maximizes row \(z\). Moreover, for \(z<z'\) and \(d<d'\),
\[
Q_{zd}+Q_{z'd'}-Q_{zd'}-Q_{z'd}
=(d'-d)\{2L[R(z')-R(z)]+z'-z\}>0.
\]

It remains to obtain normalized positive deviation costs. For \(j=1,\ldots,K-1\), let \(\Delta_j(z)=Q_{z,j+1}-Q_{zj}\). Strictly increasing differences imply that \(\Delta_j(z)\) is strictly increasing in \(z\). Set \(B_1=0\) and choose
\begin{equation}\label{eq:ordered-normalization}
\Delta_j(j)<B_{j+1}-B_j<\Delta_j(j+1).
\end{equation}
These intervals are nonempty. Define
\[
C_{zd}=B_d-B_z-Q_{zd}+Q_{zz}.
\]
Then \(C_{zz}=0\), and \(B_d-C_{zd}=B_z+Q_{zd}-Q_{zz}\), so the maximizers are unchanged.

If \(d>z\), summing \eqref{eq:ordered-normalization} from \(j=z\) to \(d-1\) gives
\[
B_d-B_z>\sum_{j=z}^{d-1}\Delta_j(z)=Q_{zd}-Q_{zz},
\]
because \(z\le j\). If \(d<z\), the upper inequalities in \eqref{eq:ordered-normalization}, summed from \(j=d\) to \(z-1\), give
\[
B_z-B_d<\sum_{j=d}^{z-1}\Delta_j(z)=Q_{zz}-Q_{zd},
\]
because \(j+1\le z\). Thus \(C_{zd}>0\) in either case.

Adding a row term and a column term does not change a cross difference, so
\[
C_{zd}+C_{z'd'}-C_{zd'}-C_{z'd}
=-\{Q_{zd}+Q_{z'd'}-Q_{zd'}-Q_{z'd}\}<0.
\]
The constructed costs satisfy all requirements.
\end{proof}

\begin{proof}[Proof of \cref{thm:stable}]
Suppress the individual index and write \(v_d=B_d-C_d\) and \(M=\argmax_d v_d\). Assignment to \(z\) increases only the value of \(z\), from \(v_z\) to \(v_z+C_z\).
At every assignment \(z\in M\), the subject strictly prefers to comply, since \(C_z>0\). Thus \(R(z)=z\) on \(M\). If a subject assigned outside \(M\) deviates, the chosen destination must belong to \(M\) because any other unassigned destination has value below a member of \(M\). Every selected destination is therefore a fixed point, proving stability.

Conversely, let \(R\) be stable, and let \(F=\{z:R(z)=z\}\). Set \(C_d=1\) for all \(d\), \(v_d=2\) for \(d\in F\), and \(v_d=0\) otherwise, with \(B_d=v_d+C_d\). An assignment in \(F\) has unique value 3. An assignment outside \(F\) has value 1, while all arms in \(F\) have value 2. Stability ensures \(R(z)\in F\), so a deterministic tie rule choosing this destination represents \(R\).

For the unique-maximizer claim, suppose a nonidentity compliance function is generated without ties at the maximum. At any nonfixed assignment outside \(M\), every member of \(M\) remains available at the common maximal value. Thus \(M\) must be a singleton whenever a deviation occurs. All deviations then go to that singleton, giving strict stability.

For the converse, first consider the identity. Take \(v_d=0\) and \(C_d=1\) for every \(d\), so each assignment uniquely selects itself. For a nonidentity strictly stable map, let \(q\) be its common deviation destination. This destination is fixed. Set \(C_d=1\) for all \(d\), and
\[
v_q=1,\qquad
v_d=\tfrac{1}{2}\ \text{for other fixed points},\qquad
v_d=-1\ \text{for nonfixed points}.
\]
With \(B_d=v_d+1\), assignment to a fixed point uniquely selects that point, while assignment to a nonfixed point uniquely selects \(q\). This represents the prescribed function.
\end{proof}

\begin{proof}[Proof of \cref{thm:unordered}]
For necessity, fix a representing pair \((C,\mathcal B)\). Admissibility means that, for each assignment \(z\), exactly one treatment \(d\) has \(B_d-C_{zd}>0\), while all others have nonpositive net payoff. It follows that
\[
A_d(R_B)=\{z:C_{zd}<B_d\}=T_d(B).
\]
For a fixed \(d\), these threshold sets are nested as the scalar \(B_d\) varies. Hence the generated collection is unordered monotone. The included profile \(B^I\) has \(T_d(B^I)=\{d\}\) and generates the identity.

Conversely, let \(\Rs\) be unordered monotone and contain \(I\). For each destination \(d\), its basin collection is a finite chain containing \(\{d\}\). Every nonempty set in the chain contains \(d\). We may therefore order the assignments by a permutation \(\sigma_d\), with \(\sigma_d(1)=d\), so that every basin is a prefix:
\[
A_d(R)=\{\sigma_d(j):1\le j\le \lvert A_d(R)\rvert\}.
\]
An empty basin corresponds to the empty prefix.

Define the common cost matrix and benefit profiles by
\[
C_{\sigma_d(j),d}=j-1\quad(j=1,\ldots,K),\qquad B_d(R)=\lvert A_d(R)\rvert.
\]
The costs satisfy \(C_{dd}=0<C_{zd}\) off the diagonal, and \(C_{zd}<B_d(R)\) exactly when \(z\in A_d(R)\). Since the basins of \(R\) partition \(\K\), the benefit profile \(B(R)\) is admissible and uniquely generates \(R\). Choose
\(\mathcal B=\{B(R):R\in\Rs\}\). It includes \(B(I)=(1,\ldots,1)\), which satisfies the requirement on \(B^I\). The generated collection is exactly \(\Rs\).
\end{proof}

\section{Possibility and truthful revelation}\label[appendix]{app:design}

\begin{proof}[Proof of \cref{thm:possibility}]
Condition on \(S_i=d\). Assignment to \(d\) produces \(Y_{id}\). Under assignment to \(z\), subjects with \(R_i(z)=z\) produce \(Y_{iz}\), while the others produce \(Y_{id}\). By conditional randomization, the difference in means is
\[
\E[\ind{R_i(z)=z}(Y_{id}-Y_{iz})\mid S_i=d].
\]
The take-up rate is \(P(R_i(z)=z\mid S_i=d)\). Dividing by it gives the result.
\end{proof}

\begin{proof}[Proof of \cref{prop:truthful}]
For part~(\ref{prop:truthful:i}), expected perceived utility from reporting \(s\) is
\[
U_i(s)=(1-\lambda)\sum_{z\in\K}a_zV_i(z)+\lambda V_i(s).
\]
For a favorite \(t\), \(V_i(t)=B_{it}=\max_d B_{id}\). At any nonfavorite assignment \(z\), taking \(z\) gives less than this maximum, and every alternative incurs a strictly positive cost. Since there are finitely many arms, \(V_i(z)<B_{it}\). Therefore reporting a favorite maximizes \(U_i(s)\), uniquely when the favorite is unique.

For part~(\ref{prop:truthful:ii}), we construct admissible profiles under which the subject takes her favorite \(t\) under every assignment. Suppress the subject index. Fix a vector \(v\in\mathbb R^K\) uniquely maximized at \(t\), and any ranking \(\pi\) with \(\pi_1=t\). Set
\[
C_{zz}=0,\qquad C_{zd}=(v_t-v_z)+(v_t-v_d)\quad(z\ne d).
\]
Every off-diagonal cost is positive. Choose \(B_t=v_t\) and distinct \(B_z<\min_d v_d\) for \(z\ne t\), ordered according to \(\pi\). Treatment \(t\) then uniquely maximizes net benefit under every assignment, giving
\[
R(z)=t\ \text{for every } z,\qquad V(z)=B_t-C_{zt}=v_z.
\]
Thus every ranking with favorite \(t\) can be realized with the same costs, compliance function, and assignment values. If a ranking feature differs between two such rankings, the corresponding profiles have different truthful reports but evaluate every report's lottery identically. Their truthful reports cannot both be uniquely optimal. This proves the first statement of part~(\ref{prop:truthful:ii}), with unique post-assignment choices in every constructed profile.

For the second statement, let \(q(\pi)\) be a lottery as in the statement for which truthful ranking reports are weakly optimal at every admissible profile. For two rankings \(\pi\) and \(\pi'\) with favorite \(t\), the incentive constraints at the two corresponding profiles give
\[
[q(\pi)-q(\pi')]\cdot v=0
\quad\text{for every }v\text{ uniquely maximized at }t.
\]
This set of vectors is open in \(\mathbb R^K\), so \(q(\pi)=q(\pi')\). Write the common lottery as \(q^t\).

For distinct favorites \(t\) and \(t'\), incentive compatibility requires
\[
(q^t-q^{t'})\cdot v\ge0
\quad\text{whenever }v_t>\max_{z\ne t}v_z.
\]
Letting any coordinate \(v_z\), \(z\ne t\), tend to minus infinity shows that \((q^t-q^{t'})_z\le0\). Applying the reverse constraint on the domain with favorite \(t'\) gives \((q^t-q^{t'})_z\ge0\) for \(z\ne t'\). Since lottery differences sum to zero,
\[
q^t-q^{t'}=\lambda_{tt'}(e_t-e_{t'}),\qquad \lambda_{tt'}\ge0.
\]

For three distinct favorites \(t\), \(t'\), and \(t''\), the identity \(q^t-q^{t''}=(q^t-q^{t'})+(q^{t'}-q^{t''})\) forces all three coefficients to agree. Hence there is a common \(\lambda\ge0\) for which \(q^t-\lambda e_t\) is independent of \(t\). Every coordinate of this vector is positive: for coordinate \(z\), choose \(t\ne z\), so it equals \(q_z^t>0\). Its coordinates sum to \(1-\lambda\), so \(\lambda<1\). Define \(a=(q^t-\lambda e_t)/(1-\lambda)\). Then
\[
q(\pi)=(1-\lambda)a+\lambda e_{\pi_1},\qquad
a_z>0,\qquad \sum_{z\in\K}a_z=1.
\]

Conversely, under any profile in \eqref{eq:anticipated} with distinct benefits, part~(\ref{prop:truthful:i}) shows that the favorite uniquely maximizes assignment value. Expected utility under this lottery is \((1-\lambda)a\cdot V+\lambda V(\pi_1)\), so truthful ranking reports are weakly optimal.
\end{proof}

\end{appendices}

\begingroup
\singlespacing
\small
\setlength{\bibsep}{2pt}
\bibliographystyle{econ}
\bibliography{design}
\endgroup
\end{document}